\newcommand{\comment}[1]{}
\newcommand{\Always}{\LTLsquare}
\newcommand{\Event}{\LTLdiamond} 
\newcommand{\Next}{\LTLcircle}
\newcommand{\U}{\mathbin{\mathcal{U}}}
\renewcommand{\And}{\mathrel{\wedge}}
\newcommand{\Or}{\mathrel{\vee}}
\newcommand{\Impl}{\mathrel{\rightarrow}}
\newcommand{\Into}{\Impl}
\newcommand{\LTL}{\ensuremath{\textup{LTL}}\xspace}
\newcommand{\LTLt}{\ensuremath{\textup{LTL}_{\calT}}\xspace}
\newcommand{\phiT}{\ensuremath{\varphi_{\mathcal{T}}}\xspace}
\newcommand{\phiB}{\ensuremath{\varphi_{\mathbb{B}}}\xspace}
\newcommand{\phiExtra}{\varphi^{\textit{extra}}}
\newcommand{\mycal}[1]{\ensuremath{\mathcal{#1}}\xspace}
\newcommand{\calT}{\mycal{T}}
\newcommand{\Boolbb}{\mathbb{B}}
\definecolor{darkGray}{gray}{0.55} 
\definecolor{lightGray}{gray}{0.85} 
\newcommand{\rhoB}{\ensuremath{\rho^{\mathbb{B}}}\xspace}
\newcommand{\Env}{\mathbb{E}}
\newcommand{\Sys}{\mathbb{S}}
\newcommand{\Val}{\ensuremath{\textit{val}}}
\newcommand{\KWDTwo}[1]{\ensuremath{\textit{#1}}\xspace}
\newcommand{\Max}{\KWDTwo{max}}
\newcommand{\Copies}{\KWDTwo{copies}}
\newcommand{\Unroll}{\KWDTwo{unroll}}
\newcommand{\QElim}{\KWDTwo{QElim}}
\newcommand{\Witness}{\KWDTwo{witness}}
\newcommand{\TableBenchmark}{
\begin{table*}[t!]
  \centering
\begin{tabular}{|c|c|c|c|c|c|c|c|c|c|c|c|c|}  \hline
  \multirow{2}{*}{Bn.} & \multirow{2}{*}{Cls.} & \multicolumn{7}{|c|}{Unconditional} & \multicolumn{4}{|c|}{Conditional}  \\ 
 \cline{3-13}
  & & \multicolumn{4}{|c|}{QBF} & \multicolumn{3}{|c|}{Q-SMT} & \multicolumn{2}{|c|}{QBF} & \multicolumn{2}{|c|}{Q-SMT} \\ 
 \cline{3-13}
  (nm.) & (v, l) & Pre. & 10 & 50 & 100 & 10 & 50 & 100 & 10 & 20 & 10 & 20 \\ [0.5ex] 
 \hline\hline
 \multirow{2}{*}{\textit{Li.}} & \cellcolor{darkGray} (5, 16) & $4.41$ & 0.02 & 0.44 & 9.85 & 0.32 & 0.64 & 14.08 & 422 &  - & 611 & - \\ 
  & \cellcolor{lightGray} (5, 16) & $3.71$ & 0.02 & 0.36 & 8.00 & 0.26 & 0.52 & 11.44 & 341 &  - & 493 & - \\

 \hline \hline
  \multirow{2}{*}{\textit{Tr.}} & \cellcolor{darkGray} (19, 36) & 5.13 & 0.01 & 0.53 & 11.70 & 0.38 & 0.76 & 16.72 & 507 &  - & - & - \\
& \cellcolor{lightGray} (19, 36) & 5.01 & 0.01 & 0.57 & 12.62 & 0.41 & 0.82 & 18.04 & 544 &  - & - & - \\

 \hline \hline

 \multirow{2}{*}{\textit{Con.}}& \cellcolor{darkGray} (2, 2) & 4.37 & 0.02 & 0.47 & 10.52 & 0.34 & 0.68 & 15.04 & 452 &  490 & 654 & 7029 \\
  & \cellcolor{lightGray} (2, 2) & 4.34 & 0.02 & 0.51 & 11.39 & 0.37 & 0.74 & 16.28 & 498 &  536 & 741 & - \\ 

 \hline \hline

   \multirow{2}{*}{\textit{Tn.}}& \cellcolor{darkGray} (8, 8) & $7.04$ & 0.02 & 0.43 & 9.54 & 0.31 & 0.62 & 13.64 & 411 &  4454 & - & - \\
  & \cellcolor{lightGray} (8, 8) & $7.38$ & 0.02 & 0.61 & 13.55 & 0.44 & 0.88 & 19.36 & 588 &  6563 & 824 & - \\ 

 \hline \hline

   \multirow{2}{*}{\textit{Air.}}& \cellcolor{darkGray} (13, 14) & $3.29$ & 0.01 & 0.50 & 11.14 & 0.36 & 0.72 & 15.92 & 477 &  - & 6838 & - \\ 
  & \cellcolor{lightGray} (13, 14) & $4.51$ & 0.01 & 0.40 & 8.93 & 0.29 & 0.58 & 12.76 & 397 &  - & 5124 & - \\ 

  \hline \hline
 \multirow{2}{*}{\textit{Coo.}}& \cellcolor{darkGray} (3, 5) & 3.64 & 0.03 & 0.46 & 10.16 & 0.33 & 0.66 & 14.52 & 443 &  475 & 622 & 681 \\
  & \cellcolor{lightGray} (3, 5) & 3.56 & 0.03 & 0.54 & 12.01 & 0.39 & 0.78 & 17.16 & 520 &  556 & 754 & 811 \\

 \hline \hline

\multirow{2}{*}{\textit{Usb}} & \cellcolor{darkGray} (5, 8) & 3.93 & 0.01 & 0.49 & 10.78 & 0.35 & 0.70 & 15.40 & 466 &  5151 & 6557 & - \\ 

& \cellcolor{lightGray} (5, 8) & 3.93& 0.01 & 0.56 & 12.32 & 0.40 & 0.80 & 17.6 & 5343 &  5702 & 7837 & - \\

\hline \hline
  \multirow{2}{*}{\textit{St.}} & \cellcolor{darkGray} (11, 14) & 6.06 & 0.02 & 0.39 & 8.62 & 0.28 & 0.56 & 12.32 & 3417 &  - & - & - \\

 & \cellcolor{lightGray} (11, 14) & 2.86 & 0.02 & 0.51 & 11.39 & 0.37 & 0.74 & 16.28 & 5162 &  - & - & - \\
 \hline

\end{tabular}
\caption{Results using both Q-SMT and QBF unrollings to find unconditional and conditional environment strategies of industrial benchmarks.}
  \label{tab:benchmark}
\end{table*}
}
\newtheorem{example}{Example}
\newtheorem{theorem}{Theorem}
\newtheorem{lemma}{Lemma}
\newtheorem{definition}{Definition}
\newtheorem{remark}{Remark}
\tikzstyle{every pin edge}=[<-,shorten <=1pt]
\tikzstyle{neuron}=[circle,fill=black!25,minimum size=17pt,inner sep=0pt]
\tikzstyle{input neuron}=[neuron, fill=green!50]
\tikzstyle{output neuron}=[neuron, fill=red!50]
\tikzstyle{hidden neuron}=[neuron, fill=blue!50]
\tikzstyle{small neuron}        =[hidden neuron, draw, minimum size=15pt]
\tikzstyle{small input neuron}  =[input neuron , draw, minimum size=15pt]
\tikzstyle{small output neuron} =[output neuron, draw, minimum size=15pt]
\tikzstyle{annot} = [text width=4em, text centered]
\tikzstyle{nnedge} = [-{stealth},shorten >=0.1cm, shorten <=0.05cm,line width=0.8pt,black]
\tikzstyle{edge} = [->,line width = 0.3pt, shorten >=0.2cm]
\tikzstyle{edgeWide} = [->,line width = 2pt, , shorten >=0.2cm]
\title{Explanations for Unrealizability of Infinite-State Safety Shields
}
\author{%
Andoni Rodríguez$^{1,2}$\and 
Irfansha Shaik$^{3,4}$\and
Davide Corsi$^5$\and
Roy Fox$^5$\and
César Sánchez$^1$
\affiliations
$^1$IMDEA Software Institute, Spain\\
$^2$Universidad Politécnica de Madrid, Spain\\
$^3$Kvantify Aps, Denmark\\
$^4$Department of Computer Science, Aarhus University, Denmark \\
$^6$University of California Irvine, US\\
}
\begin{document}

\maketitle

\begin{abstract}
Safe Reinforcement Learning focuses on developing optimal policies while ensuring safety.
A popular method to address such task is \textit{shielding},
in which a correct-by-construction safety component is synthetised from
logical specifications.
Recently, shield synthesis has been extended to infinite-state domains,
such as continuous environments.
This makes shielding more applicable to realistic scenarios.
However, often shields might be unrealizable because the specification is inconsistent (e.g., contradictory).
In order to address this gap, we present a method to obtain simple unconditional and conditional explanations that witness unrealizability, which goes by temporal formula unrolling. 
In this paper, we show different variants of the technique and its applicability.

\end{abstract}
\section{Introduction} \label{sec:intro}

Deep Reinforcement Learning (DRL) has been shown to successfully control
reactive systems of high complexity \cite{MaFa20}.
However, despite their success, DRL controllers 
can cause even state-of-the-art agents to react unexpectedly \cite{GoShSz14}.
This issue raises severe concerns regarding the deployment of DRL-based agents in safety-critical reactive systems.
Therefore, techniques that rigorously ensure the safe behaviour of DRL-controlled reactive systems have recently been proposed.
One of them is 
\emph{shielding}~\cite{BlKoKoWa15,AlBloEh18,corsi25efficient}, 
which incorporates an external component (a ``shield'') that \emph{enforces} an agent to behave safely according to a given specification $\varphi$
specified in temporal logic.
In this paper, we use post-shielding, where
the shield does not interrupt the agent unless it violates a safety constraint.

Regularly, shields are built from specifications using 
reactive synthesis
\cite{pnueli89onthesythesis,pnueli89onthesythesis:b}, in which, 
given a specification $\varphi$, a system is crafted that is
guaranteed to satisfy $\varphi$ for all possible behaviors of its
environment.
Realizability is the related decision problem of deciding whether such
a system exists.
This problem is well-studied for specifications written in Linear
temporal logic (LTL) \cite{pnueli77temporal,jacobs17reactive}.
However, many realistic specifications use complex data, whereas 
LTL is inherently propositional.
Alternatively, such realistic specifications can be expressed in LTL modulo theories ($\LTLt$), which replaces
atomic propositions with literals from a first-order theory $\calT$
\cite{geatti22linear}, whose domain might be infinite (e.g., numbers).
In~\cite{rodriguez23boolean,rodriguez24realizability} an $\LTLt$ specification is translated
into an equi-realizable Boolean LTL specification by (1) substituting
theory literals by fresh Boolean variables, and (2) computing, using
theory reasoning, an additional Boolean formula that captures the
dependencies between the new Boolean variables imposed by the
literals.
This approach is called \textit{Boolean abstraction} and paves the way to produce 
infinite-state shields, in which the input and output 
of the shield is no longer Boolean, but belonging to infinite data like numbers \cite{wu19shield,corsi24verification,rodriguez25shield,Kim25realizable}.

However, it is usually the case that the shield is not realizable,
which means that the uncontrollable environment has a way to violate the specification
and the shield no longer provides safety guarantees.
Thus, we want to understand why this happens: i.e, have an explanation.
Recently, the problem of explainability in AI has gained traction,
which has had a major impact on the interest in explaining the intricacies of 
reactive systems \cite{schewe07bounded,baier21causality,bassan23formally}.
These approaches focus on either building simple
reactive systems, tuning their behaviour or explaining, in each timestep, why the system produces an output 
to an environment input.
However, these approaches (1) have not been used for infinite-state systems and
(2) have not addressed the particular problem of understanding shields 
that are to be synthetised from specifications.
Thus, in this paper, we want to open an alternative approach to address explainability: 
to analyze the
$\LTLt$ specification $\varphi$ of a shield itself 
in order to discover why $\varphi$ is unrealizable (in case it is) via
producing a simple witness of an environment strategy such that $\varphi$ is
violated.
\begin{example} \label{ex:intro}
  Consider the following $\LTLt$ specification:
$$\varphi = \Always [(x<2) \shortrightarrow \Next(y>1)] \wedge [(x \geq 2) \shortrightarrow (y < x)],$$ 
where $x,y \in \mathbb{Z}$ and where $x$ is uncontrollable (i.e., belongs to an \textit{environment} player) 
and $y$ is controllable (i.e., belongs to an \textit{environment} player).

\end{example}

Specification $\varphi$ is unrealizable, which means  
that the environment player has a strategy to assign valuations to $x$ such that 
$\varphi$ is violated.
Therefore, it can be argued that we can synthetise (e.g., using \cite{rodriguez24adaptive,rodriguez24predictable}) the environment strategy
for $\varphi$ and analyze it in order
to understand why $\varphi$ is unrealizable.
However, 
this strategy might be complex to interpret and extract simple
explanations (see difficulty in Fig.~\ref{fig:simpleAutomata} of App.~\ref{app:suppl}).
Moreover, this extraction is not automatic and the synthesis procedure might not scale.

%

In this paper, we address these problems.
We propose a general, reusable and an automatic framework to find unrealizability explanations:
a bounded unrealizability checking via
generating finite
\textit{unconditional} and \textit{conditional strategies} as explanations for unrealizable
$\LTLt$ specifications.
The idea with (1) unconditional explanations is to provide a finite set of environment assignments
that shows why a specification is unrealizable.
Alternatively, (2) conditional explanations mean that the environment observes the stateful memory 
in order to take a decision, which means that the explanation is a function over previous moves.
Our general solution works as follows: (1) given $\varphi$, we construct an
\textit{unrolled} formula $\varphi^k$ of length $k$ that creates $k$
copies of variables and instances of $\varphi$; and (2) we use $\varphi^k$
to find witnesses of plays that make the environment player reach
a violation of $\varphi$.
This search is described as a logical formula with different
quantifier alternations, depending on the conditionality of the
strategy sought (meaning that the
environment is \textit{conditioned} to a greater or lesser extent on previous moves).
If the strategy is completely unconditional, then the formula is quantified with
the form $\exists^*\forall^* \textit{. } \varphi$, 
whereas more conditional strategies 
involve some quantifier alternations such as
$\exists^*\forall^*\exists^*\forall^* \textit{. } \varphi$, which can
be understood as split points where the environment \textit{observes}
some prefix of the play.
Our method explores all possible paths of length $k$ and
finds whether there is at least one violation of $\phiT$ 
that
can be achieved by the environment.

The resulting approach is sound, incomplete for unconditional explanations 
and complete up to $k$ for conditional ones.
In summary, the contributions of this paper are as follows:
\begin{enumerate}
\item An approach to solve unconditional strategy search of $\LTLt$ specifications
using satisfiability solvers for quanfidied first-order theory formulae.

\item An alternative method that is based on a combination of a Boolean abstraction, followed by
quantified Boolean formulae and a technique to produce values in $\calT$. 
Unlike the first method, this alternative provides termination guarantees in each iteration of method. 
Both techniques perform an unrolling similar to bounded model
checking~\cite{clarke01bounded} where we use a $k$-length unrolling of
the original specification.

\item An analogous technique to 1-2 to produce conditional explanations.

\item A case study to discuss the tradeoffs between different explanations and 
techniques to obtain them. 

\item Empirical evaluation showcasing the scalability of the approach.
\end{enumerate}

To the best of our knowledge, this is the first work that focuses on explaining \LTLt unrealizability and (infinite-state) shield unrealizability.


%
%
%
%
%
%


\newpage

\section{Preliminaries} \label{sec:prelim}

\subsection{Temporal Logic and Synthesis}
\subsubsection{LTL.}
We consider LTL~\cite{pnueli77temporal,manna95temporal}, whose
formulae contain atomic propositions, $\land$ and $\neg$ (the usual
Boolean conjunction and negation\footnote{We will use $\neg \varphi$
  to represent negation of a formula and $\overline{a}$ to represent
  negation of an atom $a$.}, respectively), and $\Next$ and $\U$ (the
\textit{next} and \textit{until} temporal operators).
The semantics of LTL formulae associates traces
$\sigma\in\Sigma^\omega$ with \LTL formulae as follows: 
\[
  \begin{array}{l@{\hspace{0.3em}}c@{\hspace{0.3em}}l}
    \sigma \models \top && \text{always holds} \\
    \sigma \models a & \text{iff } & a \in\sigma(0) \\
    \sigma \models \varphi_1 \Or \varphi_2 & \text{iff } & \sigma\models \varphi_1 \text{ or } \sigma\models \varphi_2 \\
     \sigma \models \Next \varphi & \text{iff } & \sigma^1\models \varphi \\
     \sigma \models \varphi_1 \U \varphi_2 & \text{iff } & \text{for some } i\geq 0\;\; \sigma^i\models \varphi_2, \text{ and } \\
    && \;\;\;\;\;\text{for all } 
     0\leq j<i, \sigma^j\models\varphi_1 \\
  \end{array}
\]
where
$\sigma \models \top$ always holds and from which we can also derive common operators like $\wedge$ and $\Always$ (which means \textit{always}), etc.
A safety formula $\varphi$ is such that for every failing trace $\sigma\not\models\varphi$ 
there is a finite prefix $u$ of $\sigma$, such that all $\sigma'$ extending $u$ also 
falsify $\varphi$ (i.e. $\sigma'\not\models\varphi$).

\subsubsection{Synthesis.} Reactive synthesis 
is the problem of producing a system from an LTL specification, where
the atomic propositions are split into propositions that are
controlled by the environment and those that are controlled by the
system.
Realizability is the related decision problem of deciding whether such a system exists.
Realizability corresponds to a turn-based game in a finite arena where, in each turn, the
environment produces values of its variables (inputs) and the system
responds with values of its variables (outputs).

We revise now conventions of reactive synthesis research: 
(1) A play is an infinite sequence of turns.
(2) The system player wins a play according to an LTL formula $\varphi$ if
the trace of the play satisfies $\varphi$.
 (3) A strategy $\rho$ of a player is a map from positions into a
move for the player.
(4) A play is played according to $\rho$ if all the moves of the
corresponding player are played according to $\rho$.
Also, (5) $\rho$ is winning for a player if all the possible plays played
according to $\rho$ are winning.
If (6) $\rho$ is winning for the system, the specification is said to be realizable (resp. unrealizable otherwise).

\subsubsection{LTL Modulo Theories.} 

A first-order theory $\calT$ \cite{bradley07calculus} consists of a finite set of functions and constants, 
a set of variables and a domain (which is the sort of its variables). Popular first-order theories are e.g., Presburger arithmetic or real arithmetic.

LTL Modulo Theories ($\LTLt$) is the extension of LTL where propositions are replaced by
literals from a given first-order theory $\calT$ (a finite-trace version is given in \cite{geatti22linear} and a infinite-trace version is given in \cite{rodriguez23boolean}).
The semantics of \LTLt associate traces
$\sigma\in\Sigma_{\calT}^\omega$ with formulae, where for atomic
propositions $\sigma \models l$ holds iff
$\sigma(0) \vDash_{\calT} l$, that is, if the valuation $\sigma(0)$
makes the literal $l$ true.

$\LTLt$ realizability is analogous to LTL realizability, but it corresponds to a game in an
arena where positions may have infinitely many successors
if ranges of variables are infinite.

\newpage

\subsection{Boolean Abstraction for $\LTLt$}

Recently, the field of reactive synthesis beyond Booleans gained traction, and solutions were proposed for different fragments of $\LTLt$; e.g., very new works like \cite{heim25issy,rodriguez25counter,shaun25full}.
In this paper, we want our solution to be complete (up to a bound) and sound. 

Therefore, we build upon \cite{rodriguez23boolean}, which showed that some fragments of reactive $\LTLt$ specifications
can be translated into equi-realizable purely
Boolean LTL specifications via a procedure.
The procedure is called Boolean abstraction or Booleanization, and it
works as follows: given $\varphi$ with literals $l_i$, we get a new
specification $\phiB= \varphi[l_i \leftarrow s_i] \wedge \phiExtra$,
where $s_i$ are fresh Boolean variables, controlled by the system,
that replace the literals.
The additional subformula $\phiExtra$ uses $s_i$ as well as additional
Boolean variables $e_k$ controlled by the environment, and captures
that, for each possible $e_k$, the system has the \textit{power} to select a
response among a collection of choices, where each choice is a truth
valuation of each the variables $s_i$ that represent the literals.
That is, a \textit{choice} is a concrete valuation of the $s_i$
variables (and hence of the literals $l_i$), and a \textit{reaction}
is a collection of choices.
Pairs $(e_k, \bigvee_i(c_i))$ denote that a decision $e_k$ of the
environment can be responded by choosing one of the choices $c_i$ in
the disjunction.
Then, the set of reactions captures precisely the finite collection of
decisions of the environment and the resulting finite responses of the
system.
The set of valid reactions is determined by the literals in
a specific theory 
$\calT$.

\begin{example} \label{ex:running}
  Consider $\varphi$ from Ex.~\ref{ex:intro}, for which the abstraction is as follows:
$ \phiB = (\varphi'' \wedge \Always [(e_0 \leftrightarrow
\overline{e_1}) \wedge (e_0 \leftrightarrow \overline{e_2}) \wedge
(e_1 \leftrightarrow \overline{e_2}) \Into \varphi^\textit{Extra}])$,
where
$\varphi'' = (s_0 \shortrightarrow \Next s_1) \wedge (\overline{s_0}
\shortrightarrow s_2)$ is a direct translation of $\varphi$ ($s_0$
abstracts $(x<2)$, $s_1$ abstracts $(y>1)$ and $s_2$ abstracts
$(y<x)$) that over-approximates the power of the system.
The additional sub-formula $\varphi^{\textit{Extra}}$ corrects the
over-approximation and makes $\varphi_{\mathbb{B}}$ preserve the
original decision power of each player:
\begin{align*}
    \begin{array}{rcl}
      e_{0} & \Into & \big( s_{0} \wedge s_{1} \wedge \overline{ s_{2}}) \vee (s_{0} \wedge \overline{s_{1}} \wedge s_{2}) \vee (s_{0} \wedge \overline{s_{1}} \wedge \overline{s_{2}} \big)
      \\  &\wedge& \\
      e_{1} &\Into &\big(\overline{s_{0}} \wedge s_{1} \wedge \overline{s_{2}}) \vee (\overline{s_{0}} \wedge \overline{s_{1}} \wedge s_{2})
      \\  &\wedge& \\
      e_{2} &\Into &\big(\overline{s_{0}} \wedge s_{1} \wedge s_{2}) \vee (\overline{s_{0}} \wedge s_{1} \wedge \overline{s_{2}}) \vee (\overline{s_{0}} \wedge \overline{s_{1}} \wedge s_{2})
    \end{array}
\end{align*}
%
%
 where $e_0, e_1, e_2$ belong to the environment and $s_0, s_1$ belong
 to the system (and all of them are Boolean).
%
 %
 Intuitively, $e_0$ represents $(x < 2)$, $e_1$ represents $(x = 2)$
 and $e_2$ represents $(x > 2)$.
 Choices are playable valuations in control of the system: $c_0=\{s_0
 \wedge s_1 \wedge s_2\}$, $c_0=\{s_0 \wedge s_1 \wedge
 \overline{s_2}\}$, \ldots $c_7=\{\overline{s_0} \wedge \overline{s_1}
 \wedge \overline{s_2}\}$.
 In other words, the system can respond to $e_0$ with either
 $c_1$, $c_2$ or $c_3$; to $e_1$ with either $c_5$ or
 $c_6$; and to $e_2$ with either $c_4$, $c_5$ or $c_6$.


%
Note that $e_1$ results in a strictly more restrictive set of choices
for the system than $e_2$, which allows the system to choose one more
valuation, specificaly $c_4=(\overline{s_0} \wedge s_1 \wedge s_2 )$.
Thus, a ``clever'' environment will never play $e_2$ and it will play
$e_1$ instead.
Therefore, for simplicity in the paper, we will consider the
simplified (equi-realizable) specification
$\phiB = 
  [(s_{0} \Into s_{1}) \wedge (\overline{s_{0}} \Into s_{2})] \wedge
  \Always [(e_{0} \leftrightarrow \overline{e_{1}}) \Into \varphi_{\textit{Extra}'}]$, where 
$\varphi_{\textit{Extra}'}$ is a version of $\varphi_{\textit{Extra}}$ where $e_2$ is ignored.
%
%
%

%
\end{example}


\newpage

\section{Unconditional Explanations} \label{sec:finding}

\subsection{Unconditional Encoding} \label{subsec:uncondEncod}

The goal of this paper is to generate simple and automatic explanations for
unrealizability in safety $\LTLt$ specifications. 
We soon explain how we perform this automatization, but first introduce a notion of \textit{simplicity}.

Since, in a reactive system, the interplay between the players
can be very complex, it is appealing to find a strategy that reduces the noise of this interplay
as far as possible: in other words, ignoring some moves of the (potentially infinite) interplay in order to get an explanation that is as close as possible to a prefix.
This is particularly interesting in shields, because we provide a simple explanation regardless 
of how sophisticated the shielded policy is, and also regardless of how complex the environment is.
In this paper, we designed these explanations in the form of \textit{unconditional} and \textit{conditional} strategies. Let us begin with the first ones:

\begin{definition}
Given a specification $\varphi$ and a length $k$, we call \textit{unconditional strategies} $\rho_{k}$ 
to strategies in which environment can reach a violation of $\varphi$ in $i$ timesteps with
a sequence of moves that is independent of the system.
\end{definition}

More formally, $\rho_{k}$ is a constant function that assigns valuations to the environment variables up to length $k$.

\begin{example} \label{ex:explainedUncond}
In $\varphi$ of Ex.~\ref{ex:intro} it
suffices for the environment to (1) play a value for $x$ such that $(x<2)$, for
example $x:1$, in timestep $i=0$; and (2) play $x:2$ in $i=1$.
This is a winning strategy for the environment in two steps, no matter
what the system plays.
Thus, we say that the environment has a two-step
\textit{unconditional strategy} (or \textit{explanation}) in $\varphi$ and we denote it
$\rho_{k:2}=\{x^0:1, x^1:2\}$.
\end{example}

Note that, from this point onwards, we use the terms \textit{explanation} and \textit{strategy}
interchangeably, whenever the meaning is clear from the context.
Now, in order to generate unconditional explanations/strategies from $\varphi$ of a length up to $k$, 
we propose to use \textit{unrollings} of $\varphi$ to formalize the existence of this 
statement in some logic.
Unconditional properties are of the form \textit{there is} a sequence
of moves of the environment such that, \textit{for all} moves of the
system, the formula is violated, which corresponds to a prefix
$\exists^*\forall^*$ formula.
%
%

\begin{definition} \label{def:uncondUnroll}
Given a specification $\varphi$ and a length $k$, we call \textit{unconditional unrolling formula}
to an encoding 
$ \psi = \exists a^0,a^1,...,a^{k-1}. \forall
b^0,b^1,...,b^{k-1}. \neg [\varphi_0 \And \varphi_1\And ...\And
\varphi_{k-1}], $ where variables $a_i$ are controlled by the
environment and $b_j$ belong to the system, and 
formulae $\varphi_i$ correspond to instantiations of the
specification $\varphi$ at instant $i$.
\end{definition}

Note that we negate $[\varphi_0 ...]$ in our query, because the objective of the environment is to
find a witness of the negation of $\varphi$.
Also, note that copies $\varphi_i$ resemble classical bounded
model checking~\cite{biere99symbolic,clarke01bounded}.

\begin{remark}
Before the encoding, $\varphi$ is transformed into negation normal form, and then
specialized as $\varphi^{i,k}$ for every step $i$ (and the maximum
unrolling $k$), using the fix-point expansions of the temporal
operators.
When an appropriate solver for $\psi$ is searching to make a formula $\top$, then every
attempt to expand a sub-formula beyond $k$ is replaced by $\bot$ 
(and vice-versa when seeking to make a formula $\bot$).
\end{remark}


\newpage

\subsection{Method \#1: A Q-SMT Encoding} \label{subsec:QSMT}

For a specification $\varphi$ in $\LTLt$,
$\psi$ of Def.~\ref{def:uncondUnroll} is a quantified formula in some first-order theory $\calT$, 
which is a natural encoding that can be solved using quantified satisfiability modulo theories 
(Q-SMT) procedures; for instance, \cite{cooper72theorem} for integers or \cite{collins75quantifier} for reals.

Now, we detail our method and we start with $k=1$ in order to find whether there is an environment
one-step unconditional winning strategy
$\rho_{k=1}$ for $\phiT$.

\begin{example} \label{ex:qsmt}
The $1$-unrolling for $\varphi$ in Ex.~\ref{ex:intro} is: 
$\exists x^0 . \forall y^0. \neg \varphi^{0,1}$, 
where
$$\varphi^{0,1} = [((x^0 < 2) \Into \top) \wedge ((x^0 \geq 2)
\Into (y^0 < x^0))],$$ 
and where $x^0,y^0 \in \mathbb{Z}$ are
variables $x$ and $y$ instantiated in timestep $i=0$.
Note that the sub-formula $\Next(y>1)$ is replaced by $\top$ because
it falls beyond the end of the unrolling, as in standard bounded model
checking (as described in remark 1).
%
%
Thus, the definitive encoding is: 
$$\exists x^0. \forall y^0. \neg (((x^0 < 2) \Into \top)
\wedge ((x^0 \geq 2) \Into (y^0 < x^0))),$$
which is
unsatisfiable (i.e., \texttt{unsat}).
Thus, there is no strategy $\rho_{k=1}$.

However, as seen in Ex.~\ref{ex:explainedUncond}, there exists a two-step unconditional strategy
$\rho_{k=2}$.
%
%
%
%
Moreover, the following Q-SMT encoding (for unrollings depth $k=2$) finds this
witness.
\[\exists x^0,x^1 . \forall y^0,y^1. \neg [\varphi^{0,2}\wedge \varphi^{1,2}], \text{ where } \]
$$\varphi^{0,2} = [((x^0 < 2) \Into (y^1 > 1)) \wedge ((x^0
\geq 2) \Into (y^0 < x^0))]$$ 
and
$$\varphi^{1,2} = [((x^1 < 2) \Into \top) \wedge ((x^1 \geq
2) \Into (y^1 < x^1))]$$
In this case, a Q-SMT solver will respond that the formula is satisfiable
and will output a model such as $m = \{x^0 := 1, x^1 := 2\}$ as an
assignment that satisfies it.
We can see that $m$ is precisely $\rho_{k=2}$.
Most importantly, note how easily the user can understand this explanation comparing to visually inspecting the environment strategy (of Fig.~\ref{fig:simpleAutomata}).

\end{example}
 
In summary, this approach relies on incremental calls to Alg.~\ref{algoMain},
which receives the $\LTLt$ specification $\phiT$, 
an unrolling depth $k=\Max$, 
environment variable set $X$ and system variable set $Y$.
Alg.~\ref{algoMain} also uses sub-procedures:
(1) $\Copies(A,n)$, which performs $n$ timestep copies of the
  set $A$ of variables;
(2) $\Unroll(\varphi,n)$, which performs $m$ unrollings of
  $\varphi$;
(3) $\QElim(X,\varphi)$, which performs quantifier-elimination
  (QE) of set of variables $A$ of variables from formula $\varphi$
  (which must be the inner set of variables); and
(4) $\Witness(\varphi)$, which returns a model of a
  satisfiable formula $\varphi$.
Note that the unrolling formula $F$ (line 5) has
$\{x_0,\ldots,x_n,y_0\ldots,y_n\}$ as free variables, $G$ (line 6)
quantifies universally over $\{y_0\ldots,y_m\}$ so it has
$\{x_0,\ldots,x_n\}$ as free variables, and therefore
$\phiT^{\text{smt}}$ (line 8) is quantifier-free with
$\{x_0,\ldots,x_n\}$ as free variables.
Also, note that, for each of the Q-SMT queries to terminate, we require
$\calT$ to be decidable in the $\exists^*\forall^*$-fragment.
%
%
 
%
\begin{algorithm} [t!]
\begin{algorithmic}[1]
  \REQUIRE $\phiT$, $\Max$, $X$, $Y$ 
  \FOR{$n=1 \text{ to } \Max$}
    \STATE $[y_0,\ldots,y_n] \gets \Copies(Y,n)$ 
    \STATE $[x_0,\ldots,x_n] \gets \Copies(X,n)$ 
    \STATE $F \gets \Unroll(\phiT,n)$ 
    \STATE $G \gets \forall y_0\forall y_1\ldots\forall y_n. F$  
    \STATE $\phiT^{\text{smt}} \gets \QElim([y_0,\ldots,y_n],G)$\\
    \IF{$\neg \phiT^{\text{smt}}$ is SAT}
      \STATE \textbf{return} $(\texttt{true}, \Witness(\phiT^{\text{smt}}))$ 
    \ENDIF
  \ENDFOR
  \STATE \textbf{return} \texttt{uncertain} 
 \caption{Unconditional bounded unrealizability check}
 \label{algoMain}
 \end{algorithmic}
\end{algorithm}

Soundness of Alg.~\ref{algoMain} is due to the following:
\begin{theorem}
\label{thm:soundnessQSMT}
If there is some unrolling depth $k$ such that $\phiT^{\text{smt}}$ is \texttt{sat},
then $\phiT$ is unrealizable.
\end{theorem}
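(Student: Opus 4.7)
The plan is to show that any satisfying model of $\neg \phiT^{\text{smt}}$ at some unrolling depth $k$ yields a concrete unconditional environment strategy that wins every play regardless of system moves, hence witnessing unrealizability of $\phiT$.

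First, I would extract semantics from the algorithm's construction. By Alg.~\ref{algoMain}, $\phiT^{\text{smt}}$ is a quantifier-free formula over the environment copies $x_0,\ldots,x_k$ obtained by applying $\QElim$ to $\forall y_0\ldots y_k.\,F$, where $F$ is the $k$-unrolling of $\phiT$. Reading the algorithm in concert with Def.~\ref{def:uncondUnroll}, the satisfiability check corresponds precisely to $\psi = \exists x_0\ldots x_k.\,\forall y_0\ldots y_k.\,\neg F$. Thus a satisfying model yields concrete values $v_0,\ldots,v_k$ for the environment copies such that, for every valuation of $y_0,\ldots,y_k$, the unrolled conjunction $F$ is falsified.

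Second, I would define the candidate unconditional strategy $\rho_m$ that plays $v_i$ at step $i$ for $i \leq k$ and any fixed default move afterwards, and argue it is winning for the environment. Any play $\pi$ compatible with $\rho_m$ has an environment prefix equal to $v_0,\ldots,v_k$, while the system prefix consists of arbitrary responses $u_0,\ldots,u_k$. By the model property, the pair $(\overline{v},\overline{u})$ falsifies $F$, so some copy $\varphi_i$ is violated at index $i \le k$ in the prefix. Here the safety assumption on $\phiT$ is essential: a finite prefix violation propagates to every infinite extension, so $\pi \not\models \phiT$. Since $\pi$ was an arbitrary play consistent with $\rho_m$, the strategy $\rho_m$ is winning for the environment against every system strategy.

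Third, I would close the gap between the bounded encoding and the infinite-trace semantics of $\LTLt$. This uses the polarity-aware truncation recalled in the remark after Def.~\ref{def:uncondUnroll}: sub-formulas that would have to be expanded beyond $k$ are replaced by $\bot$ when we seek $\top$ and by $\top$ when we seek $\bot$. The consequence, which must be proved by a short induction on the structure of the temporal operators in negation normal form, is that $\neg F$ being satisfied in the truncated model \emph{implies} a true prefix violation of $\phiT$ on infinite traces; the truncation is conservative in the direction of the environment's goal and never manufactures a spurious violation. Combined with safety, this gives $\pi \not\models \phiT$ in the genuine \LTLt semantics.

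Finally, the conclusion is immediate: an unconditional winning strategy for the environment means no system strategy wins, so $\phiT$ is unrealizable by definition. The main obstacle is the third paragraph: making precise that the polarity-aware truncation together with negation normal form preserves soundness for all $\LTLt$ temporal operators; everything else (extraction of the model, construction of $\rho_m$, safety closure) is routine once the truncation lemma is in place.
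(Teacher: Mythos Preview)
Your proposal is correct and follows essentially the same approach as the paper's proof: extract the model of the existential block, turn it into an unconditional environment strategy, and use the universal quantification over system copies to conclude that every play is losing for the system. You are more careful than the paper in explicitly invoking the safety assumption and in flagging the polarity-aware truncation lemma as the technical point that justifies passing from a bounded falsification of $F$ to a genuine violation of $\phiT$ on infinite traces; the paper's proof sketch simply asserts that the constructed strategy ``falsifies $\phiT$'' without isolating these steps.
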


\begin{remark} The Q-SMT encoding described in Ex.~\ref{ex:qsmt} and Alg.~\ref{algoMain} is also suitable for
theories whose satisfiability problems are semi-decidable: if $m$ is
obtained, then it is a \textit{legal} witness of the desired
strategy.
For instance, we can use general-purpose SMT solvers such as Z3 \cite{deMoura08z3} and
encode unrollings for the theory of nonlinear integer
arithmetic.
\end{remark}

\begin{remark} \label{rmk:adaptivity}
Since $\calT$ might have many models satisfying a same valuation of the literals, 
one may be not only interested in
finding a single $\rho$, but instead in
obtaining the \textit{best} strategy with respect to some \textit{soft} criteria.

%
\begin{example} \label{ex:adaptivity}
  Recall the $\rho$ in Ex.~\ref{ex:qsmt} is
  $\rho = \{ (x^0 : 1), (x^1 : 2) \}$.
If the user wants $x_0$ to be as closest as possible to $0$; then an
alternative $\rho' = \{ (x^0 : 0), (x^1 : 2) \}$ is a better
unrealizability witness.
A different criteria is to prioritize \textit{dynamic realism}, 
e.g., if the designers are interested in smooth solutions of the
environment in Ex.~\ref{ex:running}, then they prefer all the environment-controlled 
valuations to be as similar as
possible to each other.
For instance, we would prefer $\rho$ to
$\rho'$, since $x^0:1$ is closer to $x^1:2$ than $x^0:0$.
This
way, designers would like to see, for example, if the environment has a way
to violate the shield specification and without the need of abrupt
movements, but instead more realistic ones. This provides engineers with more
understandable explanations, but heavily depends on the domain of usage 
(further research on this is out of the scope of the paper).
\end{example}

\end{remark}

\begin{algorithm}
\begin{algorithmic}[1]
  \REQUIRE $\phiB$, $\Max$, $E$, $S$ 
  \FOR{$n=1 \text{ to } \Max$}
    \STATE $[s_0,\ldots,s_n] \gets \Copies(S,n)$ 
    \STATE $[e_0,\ldots,e_n] \gets \Copies(E,n)$ 
    \STATE $F \gets \Unroll(\phiB,n)$ 
    \STATE $G \gets \forall s_0\forall s_1\ldots\forall s_n. F$  
    \STATE $\phiB^{\text{qbf}} \gets \QElim([s_0,\ldots,s_n],G)$\\
    \IF{$\neg \phiB^{\text{qbf}}$ is SAT}
      \STATE \textbf{return} $(\texttt{true}, \Witness(\phiB^{\text{qbf}}))$
    \ENDIF
  \ENDFOR
  \STATE \textbf{return} \texttt{uncertain} 
\caption{Uncond. bounded unrealiz. check for LTL.}
 \label{algoQBF}
 \end{algorithmic}
\end{algorithm}


\subsection{Method \#2: Boolean Abstraction to QBF} \label{subsec:qbf}

Although Alg.~\ref{algoMain} is sound (find proofs of theorems in App.~\ref{app:proofs}),
each Q-SMT query lacks termination guarantees \cite{bjorner15playing}.
Therefore, we propose an alternative method based on Boolean abstractions 
for which the evaluation of each unrolling is guaranteed to terminate.
In this method, we generate unrolled formulae in QBF.
This second method follows these steps: (1) we compute an
equi-realizable purely Boolean LTL $\phiB$ from $\varphi$ following the
Boolean abstraction method~\cite{rodriguez23boolean}; (2) find an
unconditional strategy $\rhoB$ in $\phiB$ using QBF solvers; and
(3) translate $\rhoB$ from Booleans to $\calT$ using existential
theory queries, generating a strategy for 
$\varphi$.
%

\begin{example} \label{ex:QBF}
Consider the search for an unconditional
$\rhoB$ for $\phiB$ in Ex.~\ref{ex:running}.
Let
$a_i^{j}$, where $i$ refers to the variable and $j$ to the timestep. Again, $k$ is
the unrolling depth.
We now encode the problem in QBF with unrolling $k=1$:
\[
  \exists e_0^0,e_1^0.\forall s_0^0,s_1^0,s_2^0.\neg \phiB^{0,1},
\]
where $\phiB^{0,1} = [(s_0^0 \Into \top) \wedge (\overline{s_0^0} \Into s_2^0)] 
\Into[(e_0^0 \leftrightarrow e_1^0) \Into \varphi_{\textit{Extra}}^{0,1}]$ and where $\varphi_{\textit{Extra}}^{0,1}$:
\begin{align*}
    \begin{array}{rcl}
      e_0^0 & \Into & \big( s_0^0 \wedge s_1^0 \wedge \overline{s_2^0}) \vee (s_0^0 \wedge \overline{s_1^0} \wedge s_2^0) \vee (s_0^0 \wedge \overline{s_1^0} \wedge \overline{s_2^0} \big)
      \\  &\wedge& \\
      e_1^0 &\Into &\big(\overline{s_0^0} \wedge s_1^0 \wedge \overline{s_2^0}) \vee (\overline{s_0^0} \wedge \overline{s_1^0} \wedge s_2^0)
    \end{array}
\end{align*}
%
%
%
Note, again like in remark 1, that since $k=1$, every attempt to access a proposition
at time step greater than or equal $k$ (for example, $s_1^1$) is
replaced by $\top$.
Hence, the formula $(s_0^0 \Into \top)$ does not impose an actual
constraint.
As we saw in Subsec.~\ref{subsec:QSMT}, there is no strategy
$\rho$ for $\varphi$.
Similarly, the QBF unrolling cannot find a strategy
$\rhoB$ for $\phiB$, as the system only has to satisfy
$(\overline{s_0^0} \Into s_2^0)$, which happens if environment
plays $e_0^0$.
%
This is as expected, since $\varphi$ and $\phiB$ should be
equi-realizable and, thus, strategies should be mutually imitable (in this case, have the same length).

Analogously, since there is $\rho_{k=2}$ for $\varphi$, then
there is a corresponding $\rhoB_{k=2}$ for $\phiB$ via
a $k=2$ QBF unrolling:
\begin{align*}
    \exists e_0^0, & e_1^0, e_0^1,e_1^1. \forall s_0^0,s_1^0,s_2^0,s_0^1,s_1^1,s_2^1.\neg (\phiB^{0,2} \wedge \phiB^{1,2}),
\end{align*}
%
where $\phiB^{0,1} = [(s_0^0 \Into s_1^0) \wedge (\overline{s_0^0} \Into s_2^0)] 
\Into [(e_0^0 \leftrightarrow \neg e_0^1) \Into \varphi_{\textit{Extra}}^{0,1}]$, with $\varphi_{\textit{Extra}}^{0,1}=\varphi_{\textit{Extra}}^{0,2}$ and
where
$\phiB^{1,2} =  [(s_0^1 \Into \top) \wedge (\overline{s_0^1} \Into s_2^1)] \Into [(e_0^1 \leftrightarrow \overline{e_1^1}) \Into
 \varphi_{\textit{Extra}}^{1,2}]$, with $\varphi_{\textit{Extra}}^{1,2}$:
\begin{align*}
    \begin{array}{rcl}
      e_0^1 & \Into & \big( s_0^1 \wedge s_1^1 \wedge \overline{s_2^1}) \vee (s_0^1 \wedge \overline{s_1^1} \wedge s_2^1) \vee (s_0^1 \wedge \overline{s_1^1} \wedge \overline{s_2^1} \big)
      \\  &\wedge& \\
      e_1^1 &\Into &\big(\overline{s_0^1} \wedge s_1^1 \wedge \overline{s_2^1}) \vee (\overline{s_0^1} \wedge \overline{s_1^1} \wedge s_2^1)
    \end{array}
\end{align*}
%
%
%
Note again that an attempt to generate $s_1^2$ is replaced by $\top$,
so the formula $(s_0^1 \Into \top)$ again imposes no constraint.
This time, the environment has a winning strategy if it plays $e_0^0$
and $e_1^1$: playing $e_0^0$ in timestep $i=0$ forces $s_1^1$ to hold
in timestep $i=1$; and $e_1^1$ in $i=1$ forces $s_2^1$ in
$i=1$.
Then, only $(\overline{s_0^1} \wedge s_1^1 \wedge \overline{s_2^1})$ and
$(\overline{s_0^1} \wedge \overline{s_1^1} \wedge s_2^1)$ are valid responses of
the system, but none of them satisfied both $s_2^1$ forced in timestep
$1$ and $s_1^1$ forced by the previous timestep, so the specification
is inevitably violated.
Thus, in $\phiB$, it exists $\rhoB_{k=2}=\{e_0^0, e_1^1\}$.
Last, if we recall that $e_0$ represents $(x < 2)$ and $e_1$ represents
$(x = 2)$, the strategy $\rhoB$ in $\phiB$
obtained with QBF encoding is coherent with
$\rho_{k=2}=\{x^0:1, x^1:2\}$ obtained with Q-SMT 
$\phiT$ in Ex.~\ref{ex:qsmt}.
\end{example}

In summary, this approach starts by $\varphi$, performs abstraction $\phiB$, 
and then again relies on incremental calls to
an algorithm that is identical to Alg.~\ref{algoMain}, except for the fact that it
receives $\phiB$, Boolean environment variable set $E$ and
Boolean system variable set $S$, and does not solve an SMT query, but
a QBF query (see Alg.~\ref{algoQBF}).
Soundness of this method follows analogously to Thm.~\ref{thm:soundnessQSMT},
given that the Boolean abstraction method in use ensures equi-realizability 
of $\varphi$ and $\phiB$, which holds with \cite{rodriguez23boolean}.

\begin{theorem}
\label{thm:soundnessQBF}
If there is some unrolling depth $k$ such that $\phiB^{\text{qbf}}$ is \texttt{sat},
then $\phiB$ is unrealizable.
\end{theorem}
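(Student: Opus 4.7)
The plan is to mirror the proof of Theorem~\ref{thm:soundnessQSMT}, swapping theory reasoning for the Boolean QBF semantics. Assume for some unrolling depth $k$ the formula $\phiB^{\text{qbf}}$ is \texttt{sat}. Then the witness provides values $e_0^*, \ldots, e_{k-1}^*$ for the environment propositions such that, for every valuation of the system propositions $s_0, \ldots, s_{k-1}$, the unrolled conjunction $\phiB^{0,k}\wedge\cdots\wedge\phiB^{k-1,k}$ evaluates to \False, where beyond-$k$ subformulas are handled by the convention of Remark~1.

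First I would build an environment strategy $\rhoB$ that plays $e_i^*$ in step $i$ for $i<k$ independently of the system's moves, and plays arbitrarily thereafter. Fix any system strategy; by the QBF witness, the length-$k$ prefix of the induced play falsifies the $k$-unrolled formula no matter how the system filled in $s_0, \ldots, s_{k-1}$. It then remains to lift this finite falsification to an infinite-trace violation of $\phiB$. Here I use that (i) $\phiB$, obtained by Boolean abstraction from a safety $\LTLt$ specification, is itself a safety formula, so any bad prefix forces every continuation to violate $\phiB$; and (ii) the unrolling puts $\phiB$ in negation normal form and replaces subformulas beyond $k$ by $\top$ when testing the truth of $\phiB$, which amounts to a monotone weakening. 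Falsity of the weakened unrolling therefore implies falsity of $\phiB$ on every infinite extension of the witnessed prefix, and safety closes the argument: no system strategy can keep $\phiB$ satisfied against $\rhoB$, so $\phiB$ is unrealizable.

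The main obstacle is the substitution/monotonicity lemma justifying step (ii): one must carefully track the polarity of every eliminated subformula to ensure that the $\top$-replacement only weakens (never strengthens) $\phiB$, which is why the preliminary conversion to negation normal form in Remark~1 matters. Once that lemma is in hand, the remainder is a routine QBF analog of the Q-SMT reasoning underlying Theorem~\ref{thm:soundnessQSMT}: nothing in the passage from the theory $\calT$ to the Boolean domain $\Boolbb$ introduces new subtleties, since QBF semantics are just the Boolean specialization of the first-order theory semantics used there, and the equi-realizability guarantee of~\cite{rodriguez23boolean} is not even required for this statement (it would be invoked only to transfer unrealizability of $\phiB$ back to $\varphi$).
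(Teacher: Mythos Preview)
Your proposal is correct and matches the paper's approach: the paper simply states that Theorem~\ref{thm:soundnessQBF} follows analogously to Theorem~\ref{thm:soundnessQSMT}, whose proof extracts the environment's moves from the satisfying model, builds the corresponding unconditional strategy, and observes that the universal quantifiers over the system variables already cover all system responses. If anything, you are more careful than the paper in spelling out the monotonicity justification for the $\top$-replacement (your step~(ii)), which the paper treats as understood via Remark~1.
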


\begin{remark}
Even though the performance of QBF solvers degrades with quantifier
alternations, modern solvers scale efficiently even for large formulae
with $\exists^*\forall^*$ prefixes.
Although scalability is not the goal of methods presented in 
this paper, QBF is a more mature technology than Q-SMT, 
which suggests that performance might also be gained regularly
(see Sec.~\ref{sec:empirical} for experiments in scalability).
\end{remark}

\begin{remark}
Similar to remark 2, note that our method is agnostic to the abstraction method in use
(even semi-decidable methods), 
as long as it preserves equi-realizability.
\end{remark}

\begin{remark}
Similar to remark 3, 
note that $\rhoB$ of Ex.~\ref{ex:QBF}, could also be related to other strategies
in $\phiT$ rather than $\rho$: indeed, 
$\rho'=\{x^0:0, x^1:2\}$, to $\rho''=\{x^0:-1, x^1:2\}$,
to $\rho'''=\{x^0:-2, x^1:2\}$ and an infinite amount of
strategies that make the literal $(x^0<2)$ true.
%
\end{remark}

\begin{remark}
One might wonder when is a Q-SMT encoding preferable to the QBF encoding.
%
%
We believe there are at least three situations to consider: (1) If the abstraction is not terminating, then we can try a Q-SMT encoding. (2) If the theory $\calT$ is undecidable, then still Q-SMT solvers have heuristics for semi-decidability that may produce explanations (which are correct by construction). (3) In very small instances, abstraction might consume most of the time of a \texttt{abstraction+QBF} query, so Q-SMT might be faster. 

\end{remark}
\begin{algorithm} [b!]
\begin{algorithmic}[1]
  \REQUIRE $\phiT$, $\Max$, $X$, $Y$ 
  \FOR{$n=1 \text{ to } \Max$}
    \STATE $[y_0,\ldots,y_n] \gets \Copies(Y,n)$ 
    \STATE $[x_0,\ldots,x_n] \gets \Copies(X,n)$ 
    \STATE $F \gets \Unroll(\phiT,n)$ 
    \STATE $G \gets \texttt{alternats}([y_0,\ldots,y_n],[x_0,\ldots,x_n]). F$  
    \STATE $\phiT^{\text{smt}} \gets \QElim([y_0,\ldots,y_n],G)$\\
    \IF{$\neg \phiT^{\text{smt}}$ is SAT}
      \STATE \textbf{return} $(\texttt{true}, \Witness(\phiT^{\text{smt}}))$ 
    \ENDIF
  \ENDFOR
  \STATE \textbf{return} \texttt{uncertain} 
 \caption{Conditional bounded unrealizability check}
 \label{algoMainPlus}
 \end{algorithmic}
\end{algorithm}


\subsection{Strategy Deabstraction} \label{subsec:deabstraction}

Since the explanation of Alg.~\ref{algoQBF} is given in terms of 
Boolean variables, we need a technique to produce proper values in the domain of
the theory $\calT$. We describe now how to craft a strategy $\rho$ from
$\rhoB$.

\begin{definition}
Consider an $\LTLt$ specification $\varphi$ and its equi-realizable abstraction $\phiB$.
Then, if $\rhoB$ is winning for the environment in $\phiB$, 
we call a \textit{strategy deabstraction} function to a function 
$d: \rhoB \Into \rho$
such that if $\rhoB$ is winning in $\phiB$ then $\rho$ is winning in $\varphi$.
\end{definition}
%

%
%
%

In the case of unconditional strategies, these deabstracted strategies can be obtained leveraging the partitions used
during the Boolean abstraction.

\begin{example} \label{ex:reacts}
Consider Ex.~\ref{ex:running} and the two environment variables 
$e_0$ and $e_1$, which are discrete partitions of the
infinite input space of for variables of the environment.
Recall from Sec.~\ref{sec:prelim} that this partition comes from the fact that the 
abstraction algorithm found reactions
$r_0=(x < 2)$ and $r_1=(x = 2)$ to be valid.
\end{example}

In other words, in order to get a literal in $\calT$ from an environment variable 
$e_k$ in $\phiB$, it suffices to associate each $e_k$ to the 
reaction formula in $\calT$ that is valid.

\begin{definition}
Let function $\texttt{conv}$ produce a formula in $\calT$ 
from each $e_k$.
Then, our deabstraction procedure is: 
$$d_{\rhoB\text{ to }\rho}: \bigwedge_{e_k^i \in \rhoB}^{i \leq k}(\texttt{conv}(e_k^i))$$
\end{definition}

%
%

%
This way, we only need to produce a
$\mathcal{T}$-valuation of $x$ that satisfies the reaction associated to
$e_k$.

\begin{example}
For the reactions in Ex~\ref{ex:reacts}, assignments $\exists x . r_0(x)\gets 1$ and 
$\exists x . r_1(x) \gets 2$ mean
that $e_0$, which represents, $(x < 2)$ has a witness $x:1$; and $e_1$,
which represents $(x = 2)$, has the (only possible) witness $x:2$.
This allows to characterize the strategy of Ex.~\ref{ex:running} as
expected.
Since the Boolean strategy of the environment is
$\rhoB = \{e^0_0, e^1_1\}$, then a possible
$\mathcal{T}$-strategy is $\rho = \{ (x^0 : 1), (x^1 : 2) \}$, 
as predicted in Ex.~\ref{ex:QBF}.
\end{example}

%

We now formalize that any Boolean strategy
$\rho_{\Boolbb} = \{e_i^0, e_j^1,...\}$ of the environment is related
to a $\rho = \{ (x^0 : a), (x^1 : b), ... \}$, where
$a,b \in \calT$, in the sense that the outcomes of $\phiB$ and $\phiT$
are related in terms of literals they satisfy.

\begin{theorem} \label{thm:correspondence}
  Let $\varphi$ and $\phiB$ it Boolean abstraction.  Then, an
  unconditional strategy $\rho$ of length $k$ exists if
  and only if $\rhoB$ of length $k$ exists.
\end{theorem}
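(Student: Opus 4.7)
The plan is to prove both directions of the biconditional by exploiting the equi-realizability property of the Boolean abstraction from~\cite{rodriguez23boolean} together with the deabstraction function $d$ defined in Subsec.~\ref{subsec:deabstraction}. The overall idea is that an unconditional strategy of length $k$ in either game amounts to choosing, at each timestep $i < k$, exactly one element of the finite partition of the environment's move space that the abstraction computes as the set of valid reactions.

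For the $(\Rightarrow)$ direction I would assume an unconditional winning strategy $\rho = \{(x^i : a_i)\}_{i < k}$ for $\varphi$. Since the reactions $r_j$ computed during abstraction partition the environment's input space, for each $a_i$ there is a unique reaction $r_{k_i}$ such that $a_i$ satisfies $r_{k_i}$, and this determines a unique Boolean environment variable $e_{k_i}$. I would then define $\rhoB = \{e^i_{k_i}\}_{i < k}$ and argue it is winning in $\phiB$: given any Boolean system response (valuations of $s^i_j$), the correctness of the abstraction guarantees that this response corresponds to at least one concrete system response in $\varphi$ consistent with the chosen reactions; by assumption this concrete response falsifies $\varphi$, and the Boolean-to-theory correspondence at the literal level entails that the Boolean play falsifies $\phiB$.

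For the $(\Leftarrow)$ direction I would apply the deabstraction function. Given an unconditional winning $\rhoB = \{e^i_{k_i}\}_{i < k}$ in $\phiB$, for each $e_{k_i}$ we pick a $\calT$-witness $a_i$ satisfying the associated reaction $r_{k_i}$, whose existence is guaranteed by the validity of the reaction used during abstraction. Setting $\rho = \{(x^i : a_i)\}_{i < k}$, any concrete system response in $\varphi$ induces a Boolean valuation of the $s_j$ variables that, by completeness of the abstraction, is a legal system choice in $\phiB$; by assumption, $\rhoB$ falsifies $\phiB$ against it, and again by the literal-level correspondence the concrete play falsifies $\varphi$.

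The main obstacle will be formalising that the abstraction preserves the \emph{unconditional} structure rather than only realisability in general: we must invoke equi-realizability in a way that matches strategies of exactly length $k$ and that preserves independence from the opponent's responses. This reduces to verifying that the correspondence between theory valuations $a_i$ and the reaction-indexing variables $e_{k_i}$ is memoryless and pointwise, which in turn follows from the fact that reactions depend only on the current-step environment variables, and from the fact that the $s_j$ literals are evaluated pointwise in each timestep.
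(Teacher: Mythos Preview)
Your proposal is correct and follows essentially the same approach as the paper. The paper packages the pointwise correspondence you describe (unique reaction for each environment valuation, bidirectional system-move correspondence) into an auxiliary lemma and then invokes structural induction on the formula to lift the atom-level correspondence to the full specification; your argument unfolds the same content inline across the two directions, and your ``main obstacle'' paragraph is precisely the observation the paper isolates in its lemma, namely that the reaction/choice correspondence is memoryless and per-timestep, which is what forces the strategies to have the \emph{same} length $k$ rather than merely both existing.
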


\begin{remark} Thm.~\ref{thm:correspondence} is different to correctness theorems of 
\cite{rodriguez23boolean}, because we prove that a strategy of 
\textbf{same length} exists in both $\varphi$ and $\phiB$. 
\end{remark}

\begin{remark} Soundness of deabstraction (and thus Thm.~\ref{thm:correspondence}) is preserved with any model that satisfies a given choice, meaning that, like we did in remark~\ref{rmk:adaptivity} of the Q-SMT method of Sec.~\ref{subsec:QSMT}, we can compute the witness that maximizes a certain soft criteria.
\end{remark}



\section{Conditional Explanations} \label{sec:conditional}

In Sec.~\ref{sec:finding} we showed how to discover unconditional explanations for unrealizability, 
which are simple and preferrable for the solvers (because of the lack of quantifier alternations).
However, 
they carry two fundamental problems: 
(1) incompleteness (unsatisfiability of a k-step unrolling does not mean the formula is unrealizable in $k$ timesteps);
and (2) lack of practical applicability, 
due to the fact that often an unrealizable formula does not contain an unconditional strategy that explains it.
Both problems require that the environment 
movements are no longer independent to the system, but instead has the ability to look at them
in order to choose the move most promising to win.

\begin{definition}
Given a specification $\varphi$ and a length $k$, we call \textit{conditional strategies} $\rho_{k}$ 
to strategies in which environment can reach a violation of $\varphi$ in $i$ timesteps with
a sequence of moves that depends on previous moves of the system.
\end{definition}

\begin{example} \label{ex:conditional}
We illustrate conditionality with $\varphi$':
\begin{align*}
   (x<5) & \rightarrow & \Next^3(y>9) \\
   \wedge \\ 
   (5 \leq x < 10) & \rightarrow & [((y<0) \rightarrow \Next (y>9)) \\ && \wedge ((y \geq 0) \rightarrow \Next (y \leq 9))] \\
   \wedge \\
   (10 \leq x \leq 15) & \rightarrow & (y<x) \\
   (x>15) & \rightarrow & (y>x),
\end{align*}
which is unrealizable.
Again, we can synthetise an environment automata for $\varphi'$
and again the corresponding automata is not easy to interpret 
(see Fig.~\ref{fig:conditionalAutomata} in App.~\ref{app:suppl}).
Instead, we can get both unconditional and conditional explanations.

First, $\varphi'$ is unconditionally unrealizable in 4 timesteps: if the environment plays $x:(x<5)$ in $t_k$, 
then $y>9$ has to hold in $t_{k+3}$; and if $x:10$ in $t_{k+3}$, then $(y<x)$ has to hold (i.e., $y<10$), which contradicts $(y>9)$.
Note that, if the first line of $\varphi'$ did not exist, then there would not exists an unconditional strategy.

Additionally, $\varphi$ does contain a shorter strategy of the environment to win, but this strategy is conditional 
(i.e., looks at the previous play of the system): concretely, 
the environment can play $x:(5 \leq x< 10)$ in $t_k$, then looks at whether the system played $(y<0)$ or $(y\geq0)$ in $t_k$, 
which imposes $(y>0)$ or $(y \leq 9)$ respectively in $t_{k+1}$
and thus the environment reacts accordingly in $k+1$ by playing $(10 \leq x \leq 15)$ or $(x>15)$ respectively
in order to violate $\varphi$. 
Note how this strategy is shorter, but more complex.
\end{example}

\begin{figure}[t]
\centering
\minipage{0.32\linewidth}
    \includegraphics[width=\linewidth]{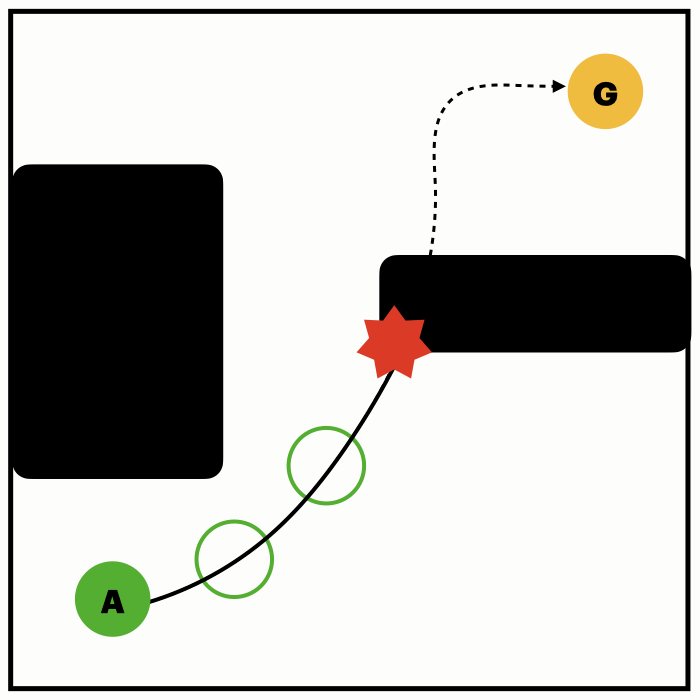}
\endminipage \hfill
\minipage{0.32\linewidth}
    \includegraphics[width=\linewidth]{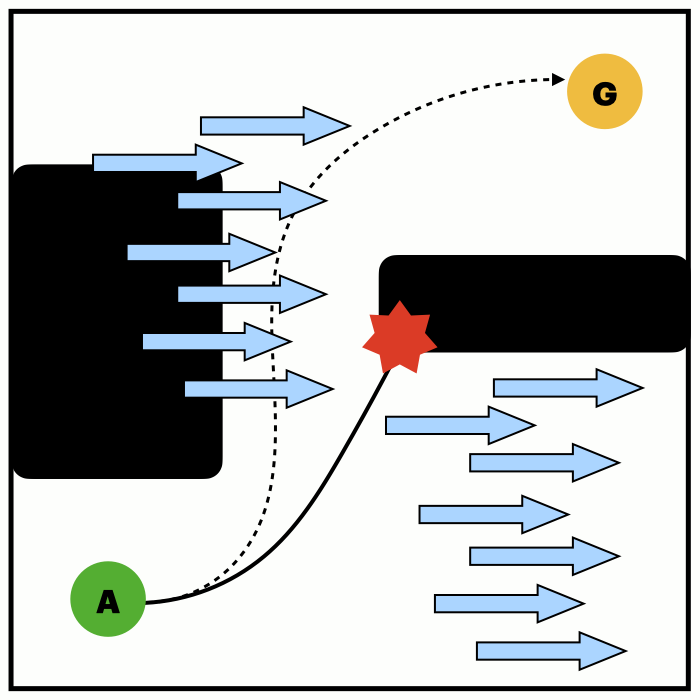}
\endminipage  \hfill
\minipage{0.32\linewidth}
    \includegraphics[width=\linewidth]{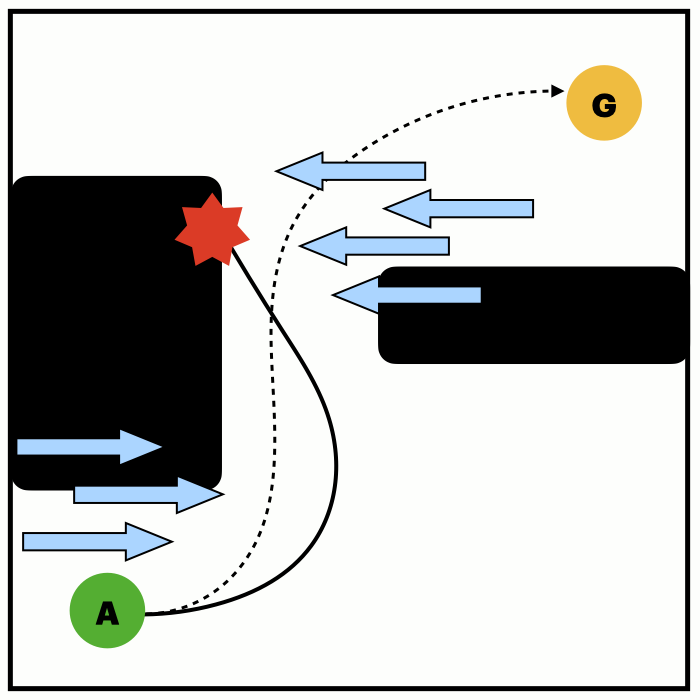}
\endminipage
\caption{Illustration of agent behavior under different conditions.}
\label{fig:motivation:training}
\end{figure}

In order to automatically produce these explanations, we 
now present 
\textit{conditional} strategy search,
for which we need to solve $(\exists^*\forall^*)^*$ formulae.

\begin{definition} \label{def:condUnroll}
Given a specification $\varphi$ and a length $k$, we call \textit{conditional unrolling formula}
to an encoding 
$$\exists a^0. \forall b^0.  \exists a^1. \forall b^1,...,
\exists a^{k-1}. \forall= b^{k-1}. \neg[\varphi_0 \And ...\And
\varphi_{k-1}],$$ where again variables $a_i$ are controlled by the
environment and $b_j$ belong to the system and formulae $\varphi_i$ correspond to instantiations 
of $\varphi$ at instant $i$.
\end{definition}

\begin{example} \label{ex:conditionalSimple}
Consider $\varphi'$ from Ex.~\ref{ex:conditional}.
Unfortunately, due to space limitations, we cannot show the 1-step unrolling
(only note that its verdict is \texttt{unsat} and that both conditional and unconditional 
formulae are the same).
%
Let us denote with Q an arbitrary quantification.
Then, the 2-step unrolling is 
$Q. \neg [\varphi^{0,1} \wedge \varphi^{1,2} ]$, 
where
$\varphi^{0,1}$:
\begin{align*}
   (x^0<5) & \rightarrow & \top \\
   (5 \leq x^0 < 10) & \rightarrow & [((y^0<0) \rightarrow (y^1>9)) \\ && \wedge ((y^0 \geq 0) \rightarrow (y^1 \leq 9))] \\
   (10 \leq x^0 \leq 15) & \rightarrow & (y^0<x) \\
   (x^0>15) & \rightarrow & (y^0>x),
\end{align*}
and where
$\varphi^{1,2}$:
\begin{align*}
   (x^1<5) & \rightarrow & \top \\
   (5 \leq x^1 < 10) & \rightarrow & \top \\ 
   (10 \leq x^1 \leq 15) & \rightarrow & (y^1<x) \\
   (x^1 >15) & \rightarrow & (y^1>x)
\end{align*}
%

As we can see, an unconditional instantiation, 
$Q = \exists x^0 . \forall y^0. \exists x^1 . \forall y^1$, 
results in an \texttt{unsat} verdict 
(because there is no way the environment can win in 2 steps unconditionally, 
although it can do it in 3 steps as shown in Ex.~\ref{ex:conditional}).
However, if the environment is allowed to observe the prefix of the play, then the 
conditional instantiation 
$Q' = \exists x^0 . \forall y^0. \exists x^1 . \forall y^1$ 
is \texttt{sat} with a model $m'$.

Moreover, since a conditional formula captures the environment's ability to to look
at the trace and adapt the behaviour, a model $m'$ is no longer an assignment of environment 
variables to valuations; instead, this only happens in instant $i=0$, 
whereas for any $i>0$ the valuation of $x^i$ is decided using a Skolem function
that depends on the previous value of the system.
For instance, $\{x^0:7, x^1:f_{x^1}(y^0)\}$, where:
\[
  f_{x^1}(y^0) = \begin{cases}
    10 & \text{if $(y^0 < 5)$}\\
    18 & \text{otherwise}
  \end{cases}
\]

\end{example}

In summary, we can construct Alg.~\ref{algoMainPlus} where changes with respect to Alg.~\ref{algoMain} are that
(1) $G$ from line 6 now represents quantifier alternations, 
(2) \texttt{witness} from line 8 does not produce constant Skolem functions
and (3) line 11 returns \texttt{unreal} (see Thm.~\ref{thm:completenessQSMT} below).
Soundness of Alg.~\ref{algoMainPlus} for safety is due to the following:

 \begin{theorem}
 \label{thm:soundnessConditionalQSMT}
If there is some unrolling depth $k$ such that $\phiT^{\text{smt}}$ of Alg.~\ref{algoMainPlus} is \texttt{sat},
then $\phiT$ is unrealizable.
\end{theorem}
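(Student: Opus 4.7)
The plan is to mirror the soundness argument for the unconditional case (Thm.~\ref{thm:soundnessQSMT}) but now exploiting the Skolem functions witnessing the alternating quantifier prefix to build a \emph{conditional} (i.e., memoryful) environment strategy that forces a violation of $\phiT$ within the first $k$ steps. At a high level: a satisfying model of the $(\exists^*\forall^*)^k$ formula produced on line~6 of Alg.~\ref{algoMainPlus} provides, for each environment timestep $i<k$, a function $f_i$ from the system's previously played values $(y^0,\ldots,y^{i-1})$ to a valuation of $x^i$; these Skolem functions compose into an environment strategy on length-$k$ prefixes.

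First I would fix some satisfying interpretation of $\phiT^{\text{smt}}$ and apply Skolemization to the prenex $\exists x^0\forall y^0\cdots\exists x^{k-1}\forall y^{k-1}$. This yields witnesses $x^0\!:=\!a_0$ and Skolem functions $f_i:\calT^{i}\to\calT$ for $i\geq 1$ such that, for every choice of values $y^0,\ldots,y^{k-1}$ by the system, the resulting assignment satisfies $\neg[\varphi^{0,k}\wedge\cdots\wedge\varphi^{k-1,k}]$. I would then define an environment strategy $\rho_{\Env}$ on the $\LTLt$ game arena by playing $x^0=a_0$ initially and, at any position reached after the system has played $y^0,\ldots,y^{i-1}$, playing $x^i=f_i(y^0,\ldots,y^{i-1})$. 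After timestep $k-1$, the strategy can play any fixed value, since the subsequent behavior will be irrelevant for the argument.

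Next I would show that every play $\sigma$ consistent with $\rho_{\Env}$ falsifies $\phiT$. By the Skolemization step, the length-$k$ prefix of $\sigma$ falsifies $\bigwedge_{i<k}\varphi^{i,k}$. Because the unrolled formulae $\varphi^{i,k}$ are obtained via the standard bounded-model-checking fix-point expansion where temporal subformulae pushed beyond the horizon are replaced by $\bot$ (resp. $\top$) according to their polarity, as explicitly specified in Remark~1, falsifying $\bigwedge_{i<k}\varphi^{i,k}$ on a prefix $u$ implies the existence of a finite bad prefix of $\sigma$ witnessing $\sigma\not\models\phiT$. Crucially, this is where I would invoke the safety assumption on $\phiT$ recalled in Sec.~\ref{sec:prelim}: once a bad prefix is observed, every extension also violates $\phiT$, so $\sigma\not\models\phiT$ regardless of what the environment plays after step $k-1$ and what the system plays at any timestep. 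Hence $\rho_{\Env}$ is winning for the environment on the safety game of $\phiT$, and $\phiT$ is unrealizable.

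The main obstacle is the bridging step between the purely syntactic bounded unrolling $\bigwedge_{i<k}\varphi^{i,k}$ and the semantic notion of a bad prefix for the full $\LTLt$ formula $\phiT$; in particular, one has to check that the $\top/\bot$ truncation performed by \Unroll\ is polarity-correct, so that falsification of the bounded conjunction under \emph{any} extension genuinely entails falsification of $\phiT$. I expect this to be a standard but slightly tedious induction on the structure of the negation-normal form of $\phiT$, analogous to the correctness argument for the $k$-bounded semantics in BMC~\cite{biere99symbolic,clarke01bounded}; once this lemma is in place, the theorem follows immediately from the Skolemization argument outlined above.
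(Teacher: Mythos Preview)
Your proposal is correct and follows essentially the same approach as the paper: the paper's proof of Thm.~\ref{thm:soundnessConditionalQSMT} is literally the sentence ``follows analogously'' to Thm.~\ref{thm:soundnessQSMT}, and your Skolemization-based extraction of a conditional environment strategy is precisely the natural analogue of the constant-witness extraction used there. Your write-up is in fact more careful than the paper's, since you explicitly flag the polarity-correctness of the $\top/\bot$ truncation (the paper relegates this to Remark~1 without proof) and the role of the safety assumption in extending the bad prefix to the full trace.
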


Moreover, for arbitrary $k$, Alg.~\ref{algoMainPlus} is guaranteed to find a conditional strategy of length $k$, 
whenever $\varphi$ is an unrealizable safety specifications.
This provides $k$-completeness.

 \begin{theorem}
 \label{thm:completenessQSMT}
 If $\varphi$ is an unrealizable safety formula in $\LTLt$, 
 then there is an unrolling $k$ such that $\phiT^{\text{smt}}$ of Alg.~\ref{algoMainPlus} is \texttt{sat}.
\end{theorem}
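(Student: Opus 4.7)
The plan is to lift the completeness argument for Boolean finite-state safety games to the $\LTLt$ setting through the equi-realizability bridge already established in the paper. First, I would invoke the Boolean abstraction result of \cite{rodriguez23boolean} to conclude that $\phiB$ is unrealizable whenever $\varphi$ is. Since the abstraction only replaces theory literals with fresh propositions and preserves the temporal skeleton, $\phiB$ is again a safety specification, and its induced two-player game has a \emph{finite} arena (states are valuations of the finitely many Boolean environment and system variables, together with the finite-state safety automaton of $\phiB$).

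Next, I would invoke standard properties of finite safety games: determinacy yields a memoryless winning environment strategy $\sigma_E$, and by König's lemma on the (finitely branching) tree of plays consistent with $\sigma_E$ pruned at each safety-violating prefix, there is a uniform bound $k$ such that every play consistent with $\sigma_E$ violates $\phiB$ within $k$ steps. Reading $\sigma_E$ as a family of Skolem functions $e^i = \sigma_E(s^0,\ldots,s^{i-1})$ immediately yields a satisfying assignment for the negated Boolean $k$-unrolling, matching the $\exists e^i \forall s^i$ quantifier alternation used by the Boolean counterpart of Alg.~\ref{algoMainPlus}.

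Finally, I would deabstract this witness to $\calT$ as in Sec.~\ref{subsec:deabstraction}. Each $e^i = \sigma_E(s^0,\ldots,s^{i-1})$ corresponds to a valid reaction $r_i$; since each observed $s^{j<i}$ is itself an abstraction of $y^{j<i}$-literals, composing these with theory witnesses of $r_i$ yields Skolem functions over the theory variables and thus a conditional $k$-strategy in $\LTLt$. By a conditional extension of Thm.~\ref{thm:correspondence}, this strategy witnesses that $\phiT^{\text{smt}}$ of Alg.~\ref{algoMainPlus} at depth $k$ is \texttt{sat}, as required.

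The main obstacle is extending Thm.~\ref{thm:correspondence} from the unconditional to the conditional setting, since Boolean Skolem functions over $\{s^j\}$ must be faithfully realized as theory-level Skolem functions over $\{y^j\}$. This reduces to observing that the equivalence classes of theory valuations of $y^j$ modulo literal truth coincide with the Boolean abstractions $s^j$, so that any piecewise-constant Boolean function over $\{s^j\}$ can be expressed as a piecewise theory function over $\{y^j\}$ whose branches are determined by the defining literals. Once this correspondence is in place, the SAT verdict of the Boolean encoding transfers directly to the $\LTLt$ encoding, closing the completeness argument.
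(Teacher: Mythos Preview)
Your argument is correct but takes a substantially different route from the paper. The paper's proof is a two-line direct argument: since $\varphi$ is unrealizable and safety, the environment's winning condition is a reachability objective, and reachability in a safety game is achieved in finitely many steps; take $k$ to be that depth. It never passes through Boolean abstraction at all. Your route instead abstracts to $\phiB$, uses finite-state safety-game determinacy plus K\"onig's lemma to extract a bounded Boolean Skolem witness, and then deabstracts. What your approach buys is rigor on exactly the point the paper glosses over: in an infinite-state game there is no a priori reason the environment's violation depth is \emph{uniformly} bounded over all system plays, and your detour through the finite Boolean arena is precisely what supplies that bound. The paper's approach buys brevity and avoids the dependency on the conditional correspondence you flag as the main obstacle; note, however, that this correspondence is stated separately in the paper as Thm.~\ref{thm:correspondenceCond}, so you may simply cite it rather than re-derive it.
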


However, as in Sec.~\ref{sec:finding}, extracting simple conditional explanations is a challenge and
using Q-SMT for $(\exists^*\forall^*)^*$ can be intractable or
even undecidable for some theories.
On the other side, Boolean abstraction for $\LTLt$ is decidable for $\exists^*\forall^*$ decidable theories, 
so whenever the abstraction is obtained, we can perform a QBF encoding that is analogous to the one in Subsec.~\ref{subsec:qbf}: 
this approach (see Alg.~\ref{algoQBFPlus} in App.~\ref{app:suppl}) starts by $\phiT$, performs abstraction $\phiB$
then again relies on incremental calls to
an algorithm that is identical to Alg.~\ref{algoMainPlus}, except for the fact that it
receives $\phiB$, Boolean environment variable set $E$ and
Boolean system variable set $S$, and does not solve an SMT query, but
a QBF query.
Soundness of and completeness of this method follow analogously.

\begin{theorem}
\label{thm:soundnessConditionalQBF}
If there is some unrolling depth $k$ such that conditional $\phiB^{\text{qbf}}$ is \texttt{sat},
then $\phiB$ is unrealizable.
\end{theorem}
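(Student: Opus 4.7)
The plan is to adapt the soundness argument for Thm.~\ref{thm:soundnessConditionalQSMT} to the Boolean setting and to close the loop via the equi-realizability of $\varphi$ and $\phiB$ established in \cite{rodriguez23boolean}. First, I would observe that satisfiability of the conditional QBF encoding $\exists e^0 \forall s^0 \cdots \exists e^{k-1} \forall s^{k-1}.\, \neg(\phiB^{0,k} \wedge \cdots \wedge \phiB^{k-1,k})$ yields, by QBF semantics, a sequence of Skolem functions $f_0,\ldots,f_{k-1}$ where $f_i$ depends only on the system variables $s^0,\ldots,s^{i-1}$ played in earlier steps. This sequence is exactly a finite conditional environment strategy $\rhoB_k$ in the game associated with $\phiB$ up to depth $k$, mirroring the move pattern captured by Def.~\ref{def:condUnroll}.

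Next, I would argue that $\rhoB_k$ forces a violation of the unrolled Boolean specification against every system response: by the QBF satisfaction condition, substituting $e^i := f_i(s^0,\ldots,s^{i-1})$ makes the unrolled conjunction $\phiB^{0,k} \wedge \cdots \wedge \phiB^{k-1,k}$ false for every valuation of $s^0,\ldots,s^{k-1}$. Using the truncation convention of Remark~1, subformulas looking beyond the horizon that must hold are replaced by $\bot$, so the falsification is realized by the finite prefix itself and cannot be repaired by any continuation. Since $\phiB$ inherits its safety character from $\varphi$ through Boolean abstraction (the extra conjunct $\phiExtra$ is itself a safety constraint and the top-level translation preserves the temporal shape of $\varphi$), every infinite extension of the witnessed prefix still violates $\phiB$.

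Concluding, I would extend $\rhoB_k$ to a full infinite environment strategy by playing arbitrarily after step $k$; by the previous paragraph this extended strategy wins in $\phiB$ against every system response, hence $\phiB$ is unrealizable. This is the same three-move pattern as in Thm.~\ref{thm:soundnessQBF}: (i) extract a finite winning witness from a $\mathtt{sat}$ verdict, (ii) use safety to promote the witness from finite to infinite, (iii) conclude unrealizability. The only additional ingredient compared to Thm.~\ref{thm:soundnessQBF} is that the witness is now a collection of Skolem functions rather than constants, which is exactly what an environment conditional strategy should be.

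The main obstacle I expect is the careful bookkeeping around the $\top/\bot$ substitution at the horizon. I must be certain that the polarity used to truncate subformulas turns a $\mathtt{sat}$ verdict into a genuine violation of $\phiB$ rather than an artefact of the bound. This step is essentially the same hurdle faced in Thm.~\ref{thm:soundnessQBF} and Thm.~\ref{thm:soundnessConditionalQSMT}, so I would reuse that argument verbatim, adding only the observation that an alternating existential–universal prefix produces Skolem functions whose arities are exactly the conditional dependence on prior system moves that the strategy requires.
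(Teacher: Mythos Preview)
Your proposal is correct and follows essentially the same route as the paper, which merely records that Thm.~\ref{thm:soundnessConditionalQBF} ``follows analogously'' to Thm.~\ref{thm:soundnessQSMT}; your write-up is in fact more explicit than the paper in spelling out the Skolem-function extraction and the safety-based promotion from a finite prefix to an infinite play. One minor remark: the appeal to equi-realizability of $\varphi$ and $\phiB$ is superfluous here, since the theorem's conclusion concerns only $\phiB$; that bridge is needed elsewhere (e.g., to transfer unrealizability back to $\varphi$ or in Thm.~\ref{thm:correspondenceCond}), but not for this statement.
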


 \begin{theorem}
 \label{thm:completenessConditionalQBF}
 If $\varphi$ is an unrealizable safety formula in $\LTLt$, 
 then there is an unrolling $k$ such that cond. $\phiB^{\text{qbf}}$ is \texttt{sat}.
\end{theorem}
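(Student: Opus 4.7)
The plan is to chain three reductions that reuse machinery already established in the excerpt. First, I would transfer unrealizability from $\varphi$ to $\phiB$ using the equi-realizability guarantee of Boolean abstraction \cite{rodriguez23boolean}: since $\varphi$ is unrealizable, so is $\phiB$. I would then argue that $\phiB$ is itself a safety property. The direct Boolean translation $\varphi[l_i\leftarrow s_i]$ inherits the temporal skeleton of the safety formula $\varphi$, and $\phiExtra$ has the form $\Always[\bigwedge_k (e_k \Into \bigvee_i c_i)]$, which is a state-local $\Always$-closed constraint, hence safety. The conjunction of safety properties is safety, so $\phiB$ is an unrealizable Boolean safety specification.

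Second, I would invoke the standard fact that in safety games over finite arenas, an environment winning strategy can be taken to win within a bounded number of steps. Concretely, since $\phiB$ has finitely many Boolean variables, the induced turn-based game has a finite state space, and the set of losing positions for the system coincides with those from which the environment can force a finite bad prefix of $\phiB$. Let $k^\star$ be the length of the longest bad prefix that the environment is forced to traverse when following some positional winning strategy from the initial position; take $k=k^\star$. Then there exists a conditional environment strategy of length $k$ forcing a violation of $\phiB$ regardless of system responses.

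Third, I would show that the QBF $\phiB^{\text{qbf}}$ at unrolling depth $k$ is satisfied precisely by a Skolemization of such a strategy. The conditional encoding in Alg.~\ref{algoQBFPlus} places the environment variables under existential quantifiers alternating with universal quantifiers over system variables and asserts $\neg[\phiB^{0,k}\wedge\cdots\wedge\phiB^{k-1,k}]$; by Skolem's theorem, any satisfying valuation corresponds to functions $f_{e^i}$ from prior system moves to environment moves that force a violation within $k$ steps, and vice versa. Because $\phiB$ is safety, the horizon-truncation convention of the unrolling (sub-formulae beyond step $k-1$ replaced by $\top$ on the positive side, $\bot$ once negated) is sound: a bad prefix reached by step $k$ cannot be undone by later behaviour. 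The main obstacle is making this last equivalence fully rigorous, namely that the syntactic unrolling semantics with truncation captures exactly the existence of a length-$k$ bad-prefix-forcing conditional strategy in $\phiB$. This amounts to specializing the argument underlying Thm.~\ref{thm:completenessQSMT} to the Boolean setting: once that lemma is available, chaining it with the equi-realizability of Boolean abstraction and the preservation of safety by the abstraction yields the desired $k$ and delivers the result.
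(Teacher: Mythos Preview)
Your approach is correct and matches the paper's, which simply states that the result ``follows analogously'' to Thm.~\ref{thm:completenessQSMT} (whose proof is the two-line observation that an unrealizable safety specification admits a finite environment winning prefix). You spell out the steps the paper leaves implicit---the equi-realizability transfer from $\varphi$ to $\phiB$, the argument that $\phiB$ remains a safety formula, the finite-arena bound, and the Skolemization linking the strategy to satisfiability of the QBF unrolling---so your write-up is in fact more complete than the paper's own.
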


However, again we need a deabstraction procedure, because the explanations that we will get
will be made up of Boolean Skolem functions.
Moreover, leaning on Subsec.~\ref{subsec:deabstraction}, the deabstraction procedure for the Boolean Skolem function is trivial:
(1) in timestep $i=0$ it substitutes the environment assignments by reactions (exactly as in the unconditional case);
whereas (2) in $i>$ it takes each Boolean Skolem function and performs a term substitution by
replacing the domain (variables $s^i$ by their literals in $\calT$) 
and the co-domain (variables $e^i$ by the reactions).
Thus:

 \begin{theorem}
 \label{thm:correspondenceCond}
  Let $\varphi$ and abstraction $\phiB$.  Then, an
  conditional strategy $\rho$ of length $k$ exists iff $\rhoB$ of length $k$ exists.
\end{theorem}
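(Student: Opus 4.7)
The plan is to adapt Thm.~\ref{thm:correspondence} from unconditional to conditional strategies, tracking how the dependence on previous moves lifts and descends through the Boolean abstraction. The new difficulty is that a conditional strategy is a family of Skolem functions rather than a tuple of constants, so one must map whole functions in a way that preserves the winning condition.

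For the easier direction, $\rhoB \Rightarrow \rho$, I would apply the deabstraction procedure outlined immediately before the theorem statement. At $i=0$ the translation is exactly the unconditional one from Thm.~\ref{thm:correspondence}. At $i>0$, each Boolean Skolem function $f_{e^i}$ consumes past propositional valuations of the $s_j^{i'}$ and returns an environment proposition $e_m$; it is deabstracted into a $\calT$-valued Skolem function $g_{x^i}$ by (a) substituting each Boolean input $s_j^{i'}$ with the truth-value of the literal $l_j$ under the system's $\calT$-valuation at step $i'$, and (b) substituting each output $e_m$ by a $\calT$-witness of its associated reaction $r_m$. One then checks that any $\calT$-play consistent with $\rho$ induces, via the literal-to-proposition map, a Boolean play consistent with $\rhoB$; since $\rhoB$ is winning for $\phiB$, the induced Boolean play falsifies $\phiB$, and therefore the $\calT$-play falsifies $\varphi$ by the equi-realizability of $\varphi$ and $\phiB$ proved in~\cite{rodriguez23boolean}.

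For $\rho \Rightarrow \rhoB$, I would first observe that, without loss of generality, $\rho$ can be assumed to depend on the system's previous $\calT$-valuations only through the literals they satisfy: since the semantics of $\varphi$ depends on the system's output only through the truth-values of its literals, any winning conditional strategy can be canonicalized to one whose Skolem function is constant on equivalence classes of system outputs with identical literal valuations. Each canonical $g_{x^i}$ is then turned into a Boolean Skolem $f_{e^i}$ by reading the relevant literal-valuations as Boolean inputs $\overline{s}$ and returning the unique proposition $e_m$ whose reaction $r_m$ is satisfied by $g_{x^i}$'s output. Winning of $\rhoB$ is verified by running the argument of the first direction in reverse.

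The main obstacle is justifying the canonicalization in the forward direction and confirming that every $\calT$-move of the environment falls under exactly one reaction $r_m$. The first follows because literals are the only semantic link between $\calT$-valuations and $\varphi$; the second is guaranteed by the Boolean abstraction construction of~\cite{rodriguez23boolean}, in which valid reactions form a partition of the environment's input space. Since both reductions are per-step and preserve winning, the conditional theorem reduces to invoking the choice-level correspondence underlying Thm.~\ref{thm:correspondence} inductively along the $k$ unrolling levels.
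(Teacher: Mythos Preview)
Your proposal is correct and matches the paper's approach, which simply records that the conditional case ``follows analogously'' to Thm.~\ref{thm:correspondence} via the per-step reaction/choice correspondence of Lemma~\ref{lem:e_to_x}; your write-up is in fact a more explicit unpacking of that analogy, spelling out the Skolem-function translation and the canonicalization step that the paper leaves implicit. One minor imprecision: in the $\rhoB \Rightarrow \rho$ direction you justify that the $\calT$-play falsifies $\varphi$ ``by the equi-realizability of $\varphi$ and $\phiB$,'' but equi-realizability is a statement about existence of winning strategies, not about individual plays; what you actually need (and have already set up via the literal-to-proposition map) is the per-play structural correspondence---since the atoms of $\varphi$ and $\phiB$ agree on the paired traces, structural induction gives that the formulae themselves agree, which is exactly how the paper argues Thm.~\ref{thm:correspondence}.
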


\begin{remark}
We can construct arbitrary versions of the algorithms presented
in this paper if we consider \textit{semi-conditional}
strategies
that are between conditional and unconditional.
For instance, for $k=10$, it can be the case that 
an unconditional unrolling 
(which has $a=1$ quantifier alternations) is
\texttt{unsat}, a conditional encoding
(for which $a=10$) is \texttt{sat} and 
some
semi-conditional encoding (for which $1< a < 10$) is \texttt{sat}.
\end{remark}

\begin{remark}
It is very important to note that a conditional encoding is not necessarily harder to solve than unconditional encodings for QBF solvers.
This is not surprising, since the number of quantifier alternations are not the sole deciding factors for hardness of a problem. For instance, in this paper, we presented Ex.~\ref{ex:intro} which can be solved using unconditional strategy. However, one could easily encode the same problem using a conditional strategy, but this does not increase the intrinsic hardness of the problem.

Indeed, every problem has its intrinsic hardness and it is not trivial to figure out at which layer such a problem belongs to in the polynomial hierarchy. 

\end{remark}

\section{Discussion: What is a \textit{better} explanation?} \label{sec:discussion}

%
Since the unrealizability explanations that we find are the shortest possible 
(because the bounded search ends as soon as a satisfiable assignment is found)
it is easy to see that a shorter explanation 
is better than a longer one.
However, there are other criteria for measuring quality of explanations: 
e.g., if we consider \textit{conditionality}, then 
finding a strategy that is less conditional is better than a more conditional one, 
because the less conditional the strategy, more of its Skolem functions will be constants.
We now formalize this intuition.

\begin{definition}
We measure the quality of an explanation 
as a function $q(k,o)$, where 
$k$ is the length of the explanation and $o$ is the amount of times the environment observes
some prefix of the play (note $o=0$ corresponds to unconditional).
\end{definition}

Since the goal is to minimize $q$, 
then, given $k$ and $o$ for strategy $\rho$ and $k'$ and $o'$ for explanation
$\rho'$, where $k \leq j'$ and $o \leq o'$, then it is easy to see that $q(k,o) \leq q(k',o')$, 
which means $q$ is a \textit{better} explanation.
%
%
However, what if $k \leq k'$ but $o > o'$, or vice-versa? 
For instance, in Ex.~\ref{ex:conditionalSimple}, is it better to have a longer unconditional explanation or a shorter conditional one?
This is not always an easy choice and its automatization will heavily depend on our definition of $q$.
Moreover, both kinds of explanations can be complementary, 
allowing us to find different unrealizability sources.

\begin{example} \label{ex:shield}
Consider a particle scenario where a drone wants to reach a certain goal 
while avoiding dangerous zones (please, refer to App.~\ref{app:drone} for a complete description). 
Also, there is an uncontrollable (environment)
wind-turbulence that can affect the position of the drone. 
Now, the developers (1) train the drone using RL until they reach a desired
success rate in satisfying the goal, and afterwards (2) 
design the specification $\varphi$ of an infinite-state safety shield that would achieve collision avoidance.
However, they find that $\varphi$ is unrealizable.
Moreover, the environment strategy is not easy to interpret (the automata is too big), so they use 
methods of Sec.~\ref{sec:finding} and Sec.~\ref{sec:conditional}.
In particular, they find that there are both conditional and unconditional explanations.

We illustrate this using Fig.~\ref{fig:motivation:training}, where the green dot (A) represents the agent, and the yellow dot (G) represents the goal. 
Also, the dotted line indicates the agent’s intended trajectory, while the solid line represents the actual trajectory executed. 
Now, in Fig.~\ref{fig:motivation:training} \textbf{(Left)} the agent collides without external influences.  
\textbf{(Middle)} The agent attempts to follow a safe trajectory, but unavoidable external perturbations
(the wind represented by blue arrows) push it towards the obstacle, resulting in a collision. 
\textbf{(Right)} The agent initially follows a safe trajectory but is influenced by the wind,
causing a deviation from its intended path. 
Despite this disturbance, the agent successfully recovers and continues towards the goal. 
However, later, a stronger perturbation pushes the agent in the opposite direction, resulting in a collision. 
As we can see, both the unconditional and conditional strategies (i.e., \textbf{Middle}  and \textbf{Right} resp.) 
allow us to discover information about $\varphi$
to restrict the power of the environment towards a realizable version of $\varphi$: \textbf{Middle}
shows that no matter what the agent does, the turbulence is too strong; 
whereas \textbf{Right} shows that $\varphi$ is allowing an unrealistic dynamic such as changing 
the turbulence direction from timestep to timestep.

\end{example}

\TableBenchmark


\section{Empirical Evaluation}
\label{sec:empirical}

\subsubsection{Experimental setting.} The main contribution of this paper is the method for 
obtaining simple explanations in reactive systems specified in $\LTLt$.
In addition, Sec.~\ref{sec:discussion} showcases in a qualitative manner how the method helps finding such explanations for shielding.
Now, we perform a scalability evaluation with a prototype \texttt{unrealExplainerT} that takes an $\LTLt$ formula and a bound $k$, produces the unrolling for both Q-SMT and QBF approaches, and returns the unrealizability explanation (in case it exists). 

Concretely, we created two versions for each benchmark in \cite{rodriguez23boolean}: (1) if the specification
does not have a strategy of length $k$ (e.g., it is
realizable), we introduced
minimal modifications to have such a strategy; (2) if the
specification did contain such strategy, then we made minimal
modifications not to have the strategy.
We tested both the original and modified versions, both QBF and Q-SMT and both conditional and unconditional cases.

\subsubsection{Unconditional encoding.}
We can see results in Tab.~\ref{tab:benchmark}, under the block \textit{unconditional}.
The first column corresponds to the name of the benchmarks (\textit{nm.}) 
The two next columns shows variables (\textit{vr.}) and literals
(\textit{lt.})  per benchmark, where gray colour indicates that the specification is unrealizable and white colour means the
opposite.; 
%
The following two groups of columns show results of the execution of
QBF and Q-SMT techniques for a different (at most) number of unrollings of the
formula: $10$, $50$ and $100$
\footnote{Note that an interval of 100 timesteps in industrial benchmarks happens
because time of the original specifications is dense: this means that if
there is a requirement that must be satisfied in 0.2 seconds and another one in 1 second,
this imposes a discrete representation with 5 timestep horizon. Thus, if another
one must hold for 20 secs, this already has to be represented with 100 timesteps.}.
We show the time needed (in seconds) for each execution, where
Preprocessing time (\textit{Pre.}) in the QBF column is the time needed
for computing the Boolean abstraction.

Results show that,
even if QBF has
an initial Boolean abstraction cost, it quickly begins to to perform better than Q-SMT in time and also reaches higher limits.
Note that we can encounter false negatives (i.e., the specification is unrealizable but because there is no unconditional strategy), but all the results are sound.
Also, note that the time necessary for constructing the unrollings of both QBF and
Q-SMT formulae is negigible for these experiments.

\subsubsection{Conditional encoding.}

Similar experiments were conducted for conditional versions of the benchmarks, noting that this method offers completeness, but at the cost of an expected dramatical decrease in scalability.
The \textit{conditional} block of Tab.~\ref{tab:benchmark} confirms these results, were times increased a lot and Q-SMT usually does not go beyond 5 whereas QBF does not go beyond 20 (usually not beyond 10).
Therefore, we can argue that, in case conditional explanations are seeked, paying the price for the abstraction is absolutely worth it.
Moreover, usually 
Q-SMT could not provide an
answer (underlying SMT-solver responded \texttt{N/A}).


\section{Final Remarks} \label{sec:conclusion}

\subsubsection{Related Work.}
%

We classify our work in classic incremental SAT/SMT/QBF solving methods like 
CEGAR-loops \cite{clarke00counterexample}, which
repeatedly check satisfiability while incrementally growing the
formula/constraint set and bounded unroll constraints each iteration
until the proof is found.
Some concrete works are intended to find counter examples and counter
traces for SMT solvers (e.g.,
\cite{chehida21smt,reynolds15counterexample}).
Also, QBF solvers have been used for such tasks (e.g.,
\cite{balabanov15efficient,janota16solving}) and
\cite{hecking18solving} encodes Petri games in QBF and provides strategies of the
environment.
QBF has also been used for planning
\cite{shaik22classical,shaik2023validation}.

Note that our fully conditional method, is not the same as bounded synthesis \cite{schewe07bounded}, because the latter bound strategies by size of the controller, not by a temporal size. 
However, our technique is similar to bounded model
checking, BMC, ~\cite{biere99symbolic,clarke01bounded} in the sense that BMC 
algorithms unroll a finite-state automata for a fixed
number of steps $k$, and check whether a property violation can occur
in $k$ or fewer steps.
This typically involves encoding the restricted model as an instance
of SAT. The process can be repeated with larger and larger values of
$k$.
Recently, QBF solving has been used in bounded model checking for
hyperproperties~\cite{hsu21bounded}, which opens the door for the same
study over $\LTLt$ properties (and also with loop conditions
\cite{hsu23efficient}).
Last, we find similar research directions in other temporal
logics.; e.g., in STL \cite{maler04monitoring}, the problem of
falsification (i.e., achieving the violation of the
proposed requirements) is considered, via different approaches (e.g.,
\cite{peltomaki22falsification}).

\subsubsection{Limitations and opportunities.}
In this work, we are restricted to the fragment of $\LTLt$ for which synthesis is decidable. Therefore, future work includes proposing similar methods to temporal logics that allow to transfer richer data across time (e.g., infinite-trace \cite{geatti22linear}).
Also, we want to study usability of semi-conditional explanations (see remark 10), because they offer a trade-off between explainability, completeness and performance (e.g., to solve failures of Tab.~\ref{tab:benchmark}) that seems promising.

Last, we consider that the discussion of Sec.~\ref{sec:discussion} about optimizing different explanation criteria has to be widely made (and note that neurosymbolic approaches seem to be a key direction in order to optimize an eventual explainability function $q$).
Thus, we want to integrate our solution under unified views of explainability. 
One example is comparing our work with 
conditional conformant planning \cite{smith98conformant,cimatti00conformant},
where 
the objective is to find a sequence of actions that will guide the system to the desired state, 
regardless of the nondeterminism.

\subsubsection{Conclusion.}
In this paper, we showed methods to find simple explanations of unrealizable $\LTLt$ safety specifications, mostly designed for shields.
We obtain such witnesses via unrollings
of the formula up to a certain number $k$ in a semi-complete
spirit.
We first showed that this method can be designed following an
SMT-with-quantifiers (Q-SMT) encoding, but that this method lacks some
termination guarantees and might not scale.
Then, we showed a second method that uses an QBF encoding preceded by
an abstraction process and a posterior deabstraction process.
In both cases, we proposed a method to generate unconditional explanations (simpler, but less common) and conditional explanations (more complex, but complete).
The paper is the basis for exciting work in other aforementioned directions.


\newpage

\bibliographystyle{kr}
\bibliography{kr-sample}

\newpage

\appendix

\section{First-Order Logic Background}
\label{app:extendedPrelim}

\subsection{Satisfiability}

The (NP-complete) Boolean Satisfiability Problem (SAT) is the decision
problems of determining whether a propositional formula has a
satisfying assignment.
Satisfiability Modulo Theories (SMT)~\cite{barrett18satisfiability}
consists on determining whether a first-order formula is satisfiable
using literals from background theories.
Background theories of interest include arithmetic, arrays,
bit-vectors, inductive data types, uninterpreted functions and
combinations of these.
SMT solving typically deals with existentially quantifier theory
variables, but it has also been extended with capabilities to
handle universal quantifiers, but typically at the expense of
semi-decidability.
In this paper, we call Q-SMT to this quantified SMT extension and
we consider theories for which satisfiability in the
$\exists^*\forall^*$ fragment is decidable (which includes e.g., linear
integer arithmetic and non-linear real arithmetic), because these are the
ones for which Boolean abstraction guarantees termination.

Quantified Boolean formulae (QBF) extend propositional formulae by
allowing arbitrary quantification over the propositional variables.
Unlike SAT and (standard) SMT solving, QBF solving deals with both
existentially and universally quantified Boolean variables.
Unlike SMT, QBF does not consider theory predicates.

\subsection{Skolem Functions}

A Skolem function is a concept that plays a crucial role in the elimination of quantifier alternation in logical formulas.

\begin{definition}
Let \( A(x_1, \ldots, x_n, y) \) be a predicate formula with individual variables \( x_1, \ldots, x_n, y \) whose domains are sets \( X_1, \ldots, X_n, Y \), respectively. A function 
\[
f: X_1 \times \cdots \times X_n \rightarrow Y
\]
is called a Skolem function for the formula 
\[
\exists y \, A(x_1, \ldots, x_n, y)
\]
if, for all \( x_1 \in X_1, \ldots, x_n \in X_n \), the following holds:
\[
\exists y \, A(x_1, \ldots, x_n, y) \Rightarrow A(x_1, \ldots, x_n, f(x_1, \ldots, x_n)).
\]
\end{definition}

Skolem functions are utilized to transform formulas into a form free from the alternation of 
the quantifiers \( \forall \) and \( \exists \). For any formula \( A \) in the language of restricted predicate calculus, 
one can construct a formula in the Skolem normal form:
\[
\exists x_1, \ldots, x_n \, \forall y_1, \ldots, y_m \, C,
\]
where \( C \) does not contain new quantifiers but includes new function symbols. 
The original formula \( A \) is deducible in predicate calculus if and only if its Skolem normal form is.
\newpage

\section{Proofs}
\label{app:proofs}

We now show proof sketched of the theorems in the paper.
Proof of Thm.~\ref{thm:soundnessQSMT} is as follows:

\begin{proof}
  Let $m$ be such that $\phiT^{\text{smt}}$ is SAT and let
  $e_0,\ldots,e_m$ be a model of $\phiT^{\text{smt}}$.
  The unconditional strategy $\rho^{\Env}$ that plays first $e_0$,
  then $e_1$, etc up to $e_m$ is winning for the environment because
  it falsifies $\phiT$.
  Note that the universal quantifiers for $y_0,\ldots,y_m$ guarantee
  that all moves of the system for the first $m$ steps are considered
  and in all cases $\phiT$ is falsified.
  Since $\rho^{\Env}$ is winning, then $\phiT$ is unrealizable.
\end{proof}

Note that Thm.~\ref{thm:soundnessQBF}, Thm.~\ref{thm:soundnessConditionalQSMT} 
and Thm.~\ref{thm:soundnessConditionalQBF} follow analogously.

Proof of Thm.~\ref{thm:correspondence} is as follows, with an auxiliary lemma:

\begin{lemma}[From $e_k$ to $x$]
  \label{lem:e_to_x}
  Let $X=[x_0,x_1, ...]$, $Y=[y_0,y_1, ...]$, $E=[e_0,e_1, ...]$ and $S=[e_0,e_1, ...]$.
  Consider $\phiT(X,Y)$ be an \LTLt formula and $\phiB(E,S)$ be
  its Boolean abstraction.
  Let us denote with $v_a : \Val(A)$, where $A$ is a set of variables.
  For every $e\in E$ there is a computable satisfiable predicate
  $r_e(X)$ such that, for every valuation $v_x\in\Val(X)$ with
  $r_e(v_x)$,
  \begin{itemize}
  \item let $v_y:\Val(A)$ be arbitrary, then the valuation
    $v_s:\Val(S)$ such that $s(s_i)$ is true if and only if $l_i(v_x,v_y)$
    holds $\phiExtra(E,S)$.
  \item let $v_s:\Val(S)$ be such that $\phiExtra(E,S)$ holds, then
    there is a valuation $v_y:\Val(Y)$ that makes $l_i(v_x,v_y)$ hold if
    and only if $s(s_i)$.
  \end{itemize}
  Moreover, for every $v_x:\Val(X)$ there is exactly one $e\in E$
  such that $r_e(v_x)$ holds.
\end{lemma}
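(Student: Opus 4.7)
The plan is to leverage directly the construction of Boolean abstraction from \cite{rodriguez23boolean}, since $\phiExtra$ is built precisely as the conjunction, over each abstract environment variable $e$, of an implication $e \Into \bigvee_i c_i$ where the $c_i$ are the \emph{valid choices} associated with a \emph{reaction} computed during abstraction. I would take $r_e(X)$ to be exactly this reaction predicate: the first-order formula over the environment variables $X$ obtained by the abstraction procedure when synthesising $e$. Satisfiability of $r_e$ is immediate because invalid (unsatisfiable) reactions are discarded during abstraction, so every surviving $e$ corresponds to a satisfiable region of the input space.

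Next I would establish the ``exactly one $e$'' clause. By construction, the reactions produced by the abstraction partition $\Val(X)$: they are mutually exclusive (distinct reactions disagree on at least one literal valuation) and exhaustive (every $v_x$ fixes the truth of every literal $l_i(x,\cdot)$ as a function of $y$, yielding exactly one reaction class). This part is essentially bookkeeping on the abstraction algorithm and does not require new ideas.

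For the two substantive bullets, I would split into a forward and a backward direction. Forward: given $v_x$ with $r_e(v_x)$ and an arbitrary $v_y$, define $v_s$ by $v_s(s_i) \DefinedAs l_i(v_x,v_y)$; I must show the induced $v_s$ is among the valid choices $c_i$ listed in $\phiExtra$ under $e$. This follows because the abstraction includes a choice $c_i$ in the disjunction for $e$ precisely when some pair in $r_e$'s partition realises $c_i$, and our chosen $v_s$ is by definition realised by $(v_x,v_y)$. Backward: given $v_s$ such that $\phiExtra$ holds under $(e, v_s)$, i.e.\ $v_s$ matches some valid $c_i$ listed for $e$, I need to produce $v_y$ with $l_i(v_x,v_y) \Leftrightarrow v_s(s_i)$. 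This relies on the abstraction's completeness: a choice $c_i$ is placed under $e$ only if it is \emph{playable}, meaning for every $v_x \models r_e$ there is a witnessing $v_y$ realising $c_i$. I would extract $v_y$ by an existential theory query against the conjunction of literal constraints, whose satisfiability is the defining property of $c_i$ being valid for $r_e$.

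The main obstacle I expect is the backward direction, because it hinges on a property of the abstraction that is global rather than local: validity of a choice for $e$ must entail that \emph{every} $v_x$ in the reaction class admits a $v_y$ matching the choice, not merely some such $v_x$. Making this precise requires carefully unpacking how \cite{rodriguez23boolean} quantifies reactions (choices are universally quantified over $v_x$ within a reaction region and existentially over $v_y$), and then invoking decidability of $\exists^*\forall^*$ in $\calT$ to guarantee that the required $v_y$ is computable. Once this is in place, the lemma follows by combining the partition property with the two directions above.
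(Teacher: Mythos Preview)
Your proposal is correct, and in fact it is considerably more detailed than what the paper provides. The paper does not give a standalone proof of this lemma: it is stated as an auxiliary result inside the proof of Theorem~\ref{thm:correspondence}, and the \texttt{proof} environment that follows the lemma statement is actually the argument for the theorem (it opens with ``The main idea is to create the corresponding sequence using Lemma~\ref{lem:e_to_x}''). The lemma itself is treated as an immediate byproduct of the Boolean abstraction construction of \cite{rodriguez23boolean}, with no further justification.

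Your unpacking is the right one: taking $r_e$ to be the reaction predicate computed during abstraction, arguing the partition property from mutual exclusivity and exhaustiveness of reactions, and splitting the two bullets into a forward direction (the induced $v_s$ is one of the valid choices listed for $e$) and a backward direction (every valid choice is realisable by some $v_y$). Your flagging of the backward direction as the delicate step is also accurate: the required property is that a choice $c$ appears under $e$ only if \emph{every} $v_x$ with $r_e(v_x)$ admits a witnessing $v_y$, which is precisely the $\forall x\,\exists y$ quantifier pattern built into the definition of valid reaction in \cite{rodriguez23boolean}. So your plan both matches the paper's implicit reliance on that reference and makes explicit what the paper leaves unstated.
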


\begin{proof}
  The main idea is to create the corresponding sequence using
  Lemma~\ref{lem:e_to_x}.
  It follows that if the system in $\phiT$ can make a collection of
  literals at some point then the system can make the corresponding
  $s_i$ hold at the same point (and viceversa).
  By structural induction, if the atoms have the same valuation then
  all sub-formulae have the same valuation.
  Therefore, given an arbitrary length $l$, if the unconditional strategy
  $\rho_{k=l}$ in $\phiT$ is winning for a player the strategy
  $\rho_{k=l}$ in $\phiB$ is winning for a player as well, and
  vice-versa.
\end{proof}

Note that Thm.~\ref{thm:correspondenceCond} follows analogously.

Proof of Thm.~\ref{thm:completenessQSMT} is as follows:

 \begin{proof}
 Since $\phiT$ is unrealizable, there is some reachability goal
 that the environment satisfies. Moreover, since $\phiT$ is in safety, then
 this goal is reached in a finite number of steps.
 \end{proof}

 Note that Thm.~\ref{thm:completenessConditionalQBF} follows analogously.

\newpage

\section{QBF Encoding}

We now show the QBF encoding for our running example.

\subsection{Unconditional Encoding}


The formula is as follows:

\begin{align*}
&\exists e_0^{0}, e_1^{0}, e_2^{0}, e_3^{0}\\
&\exists e_0^{1}, e_1^{1}, e_2^{1}, e_3^{1}\\
&\exists e_0^{2}, e_1^{2}, e_2^{2}, e_3^{2}\\
&\forall s^{0}, s^{1}, s^{2}\\
&\neg . \\
& \varphi_0: & e_0^{0} \rightarrow s^{2} \\
             && e_1^{0} \rightarrow \neg s^{0} \\
             && e_2^{0} \rightarrow (s^{0} \leftrightarrow s^{1}) \\
             && e_3^{0} \rightarrow s^{0} \\
&\wedge \\
& \varphi_1: & e_0^{1} \rightarrow \top \\
             && e_1^{1} \rightarrow \neg s^{1} \\
             && e_2^{1} \rightarrow (s^{1} \leftrightarrow s^{2}) \\
             && e_3^{1} \rightarrow s^{1} \\
\wedge \\
& \varphi_2: & e_0^{2} \rightarrow \top \\
             && e_1^{2} \rightarrow \neg s^{2} \\
             && e_2^{2} \rightarrow (s^{2} \leftrightarrow \top) \\
             && e_3^{2} \rightarrow s^{2} \\
\end{align*}

In order to solve it, we encode it in the QCIR format:

\begin{verbatim}
exists(1, 2, 3, 4)
exists(5, 6, 7, 8)
exists(9, 10, 11, 12)
forall(13, 14, 15)
output(-40)
#\varphi_0
16 = or(-1 , 15) 
17 = or(-2 , - 13) 
18 = or(13, -14)
19 = or(-13, 14)
20 = and(18, 19)
21 = or(-3, 20) 
22 = or(-4 , 13) 
23 = and(16, 17, 18, 19, 20, 21, 22)
#\varphi_1:
24 = and()
25 = or(-6 , - 14) 
26 = or(14, -15)
27 = or(-14, 15)
28 = and(26,27)
29 = or(-7, 28) 
30 = or(-8 , 14)
31 = and(24, 25, 26, 27, 28, 29, 30)
#\varphi_2:
32 = and() 
33 = or(-10 , - 15) 
34 = or(15)
35 = and()
36 = and(34, 35)
37 = or(-11, 36)
38 = or(-12 , 15)
39 = and(32, 33, 34, 35, 36, 37, 38)
# conjunction:
40 = and(23, 31, 39)
\end{verbatim}

\subsection{Conditional Encoding}

The formula is as follows:

\begin{align*}
&\exists e_0^{0}, e_1^{0}, e_2^{0}, e_3^{0}\\
&\forall s^{0}, \exists s_{c}^{0}\\
&\exists e_0^{1}, e_1^{1}, e_2^{1}, e_3^{1}\\
&\forall s^{1}, \exists s_c^{1}\\
& s^{0} \leftrightarrow s_c^{0}\\
& s^{1} \leftrightarrow s_c^{1}\\
&\neg . \\
& \varphi_0: & e_0^{0} \rightarrow \top \\
             && e_1^{0} \rightarrow \neg s_c^{0} \\
             && e_2^{0} \rightarrow (s_c^{0} \leftrightarrow s_c^{1}) \\
             && e_3^{0} \rightarrow s_c^{0} \\
&\wedge \\
& \varphi_1: & e_0^{1} \rightarrow \top \\
             && e_1^{1} \rightarrow \neg s^{1} \\
             && e_2^{1} \rightarrow (s_c^{1} \leftrightarrow \top) \\
             && e_3^{1} \rightarrow s^{1} \\
\end{align*}

Again, we encode it in QCIR:

\begin{verbatim}
exists(1, 2, 3, 4)
forall(13)
exists(9)
exists(5, 6, 7, 8)
forall(14)
exists(10)
output(55)
#\varphi_0
17 = or(-2 , -13) 
18 = or(13, -14)
19 = or(-13, 14)
20 = and(18, 19)
21 = or(-3, 20) 
22 = or(-4 , 13) 
23 = and(17, 21, 22)
#\varphi_1:
25 = or(-6 , -14) 
29 = or(-7, 14) 
30 = or(-8 , 14)
31 = and(25, 29, 30)
# conjunction:
32 = and(23, 31)
# mutual exclusion, 1,2,3,4:
33 = or(1, 2, 3, 4)
34 = or(-1, -2)
35 = or(-1, -3)
36 = or(-1 , -4)
37 = or(-2, -3)
38 = or(-2, -4)
39 = or(-3, -4)
40 = and(33, 34, 35, 36, 37, 38, 39)
# mutual exclusion, 5,6,7,8:
41 = or(5, 6, 7, 8)
42 = or(-5, -6)
43 = or(-5, -7)
44 = or(-5 , -8)
45 = or(-6, -7)
46 = or(-6, -8)
47 = or(-7, -8)
48 = and(41, 42, 43, 44, 45, 46, 47)
# equality:
49 = or(-9, 13)
50 = or(-13, 9)
51 = and(49, 50)
52 = or(-10, 14)
53 = or(-14, 10)
54 = and(52, 53)
#output:
55 = and(-32, 40, 48, 51, 54)
\end{verbatim}

\subsection{Handling System Moves with existential variables}

Unrealizability can be encoded as a $2$-player turn-based game where environment is the first (existential) player and system is the second (universal) player.
A $2$-player game can be elegantly encoded as a QBF.
For encoding environment, one can use existential variables since we only need a single winning move in each turn.
For System, on the otherhand, we need to encode all possible moves. Thus, we can use universal variables to represent all the moves.
The goal of the environment player is then to \emph{win the game} i.e., make the formula true regardless of the opponent moves.
Now, when translating in a QBF encoding, one cannot disable any turn of universal player.
For example, consider the following formula where $0 \leq i \leq n$:
\begin{equation*}
\exists e_0, \forall s_0, ... \exists e_n, \forall e_n \neg (\varphi \land (e_i \rightarrow \neg s_i))
\end{equation*}
Regardless of the formula $\varphi$, setting $e_i$ to True makes the formula True.
Essentially, the above formula encodes that \emph{If environment plays $e_i$ move, then do not allow the $s_i$ move for the system}.
However, a QBF solver (by construction) will search the complete search space i.e., all possible moves of the universal player.
Disallowing or forcing a move of the universal player simply makes the formula false (or true if negated).
While one can construct a custom QBF solver to disallow or force moves of universal player, we cannot use an off-the-shelf QBF solver.

Alternatively, one can use existential variables for each universal variable to encode the force or disallow the moves on the universal player.
For example, let us extend our previous formula $s_0^{c}, ..., s_n^{c}$ existential variables as follows:
\begin{align}
&\exists e_0, \forall s_0, ... \exists e_n, \forall e_n\\
&\exists s_0^{c}, ..., s_n^{c}\\
& \bigwedge_{k=0}^{n}(s_k \leftrightarrow s_k^{c})\\
&\neg (\varphi \land (e_i \rightarrow \neg s_i^{c}))
\end{align}

Equations 2 and 3 defines existential variables and forces equalities between corresponding universal and existential system variables.
Unlike previous encoding, we allow all possible moves of the universal player.
At the same time, when environment plays $e_i$, we encode that $s_i$ is an invalid move and thus the formula becomes True (by negation) only in that particular move but not others.
The new formula is not trivially true, it now depends on the $\varphi$.
While this technique adds one additional quantifier layer, notices that the search space is not increased.
Due do our equality clauses in the equation 3 by unit clause propagation the search space remains same.
It is also possible to push existential system variables to the layers directly below corresponding universal variables as follows:
\begin{align}
&\exists e_0, \forall s_0, \exists s_0^{c} ... \exists e_n, \forall e_n, \exists s_n^{c}\\
& \bigwedge_{k=0}^{n}(s_k \leftrightarrow s_k^{c})\\
&\neg (\varphi \land (e_i \rightarrow \neg s_i^{c}))
\end{align}
Notice that the number of quantifier layers still only increased by 1.
Similar techniques are used in encoding two-player games as QBF, for instance handling illegal moves \cite{shaik2023validation}.
%


\begin{figure}[t]
\centering
\includegraphics[width=0.6\linewidth]{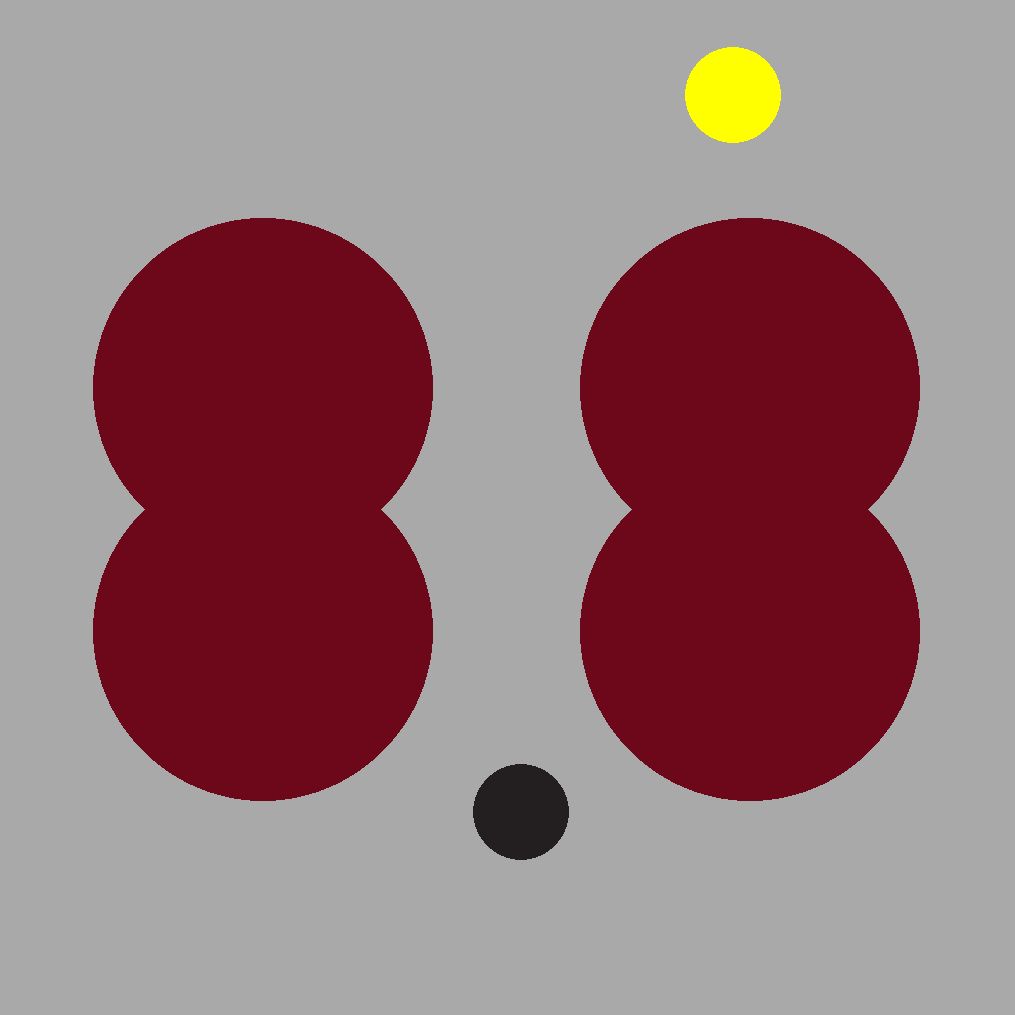}
\caption{Screenshot of the configuration employed for the main set of experiments.} 
\label{fig:appendix:drone2denv}
\end{figure}

\section{The \texttt{Drone2D Environment}} \label{app:drone}

The \textbf{Drone2D Environment} (Fig.\ref{fig:appendix:drone2denv}) is a lightweight continuous control environment where a drone navigates a bounded 2D space to reach a randomly assigned goal position while avoiding collisions with obstacles. The agent's state includes its current position $(x, y)$, velocity $(v_x, v_y)$, and the goal's coordinates $(x_{\text{goal}}, y_{\text{goal}})$. The agent's actions represent accelerations $(a_x, a_y)$ in the $x$- and $y$-directions. The environment simulates realistic dynamics with position and velocity updates, and the agent is rewarded for minimizing its distance to the goal.

\begin{itemize}
    \item \textbf{Unsafe regions} (red circles): The drone cannot cross these zones. A collision with these obstacles results in episode termination with failure.
    \item \textbf{Goal} (yellow circle): The target the agent must reach.
    \item \textbf{Drone} (black circle): The visual representation of the drone.
\end{itemize}

\subsection{State Space}
The state of the drone at time $t$ is represented as:

\[
s_t = [x_t, y_t, x_{\text{goal}}, y_{\text{goal}}, w_x, w_y]
\]

where:
\begin{itemize}
    \item $x_t, y_t$: Current position of the drone.
    \item $x_{\text{goal}}, y_{\text{goal}}$: Coordinates of the goal position.
    \item $w_x, w_y$: Turbulence force affecting the drone’s motion along both axes.
\end{itemize}

The state space is defined as:

\[
x_t, y_t, x_{\text{goal}}, y_{\text{goal}} \in [0, 21], \quad w_x, w_y \in \mathbb{R}
\]

\subsection{Action Space}
The action at time $t$ is represented as:

\[
\mathbf{a}_t = [a_t^x, a_t^y]
\]

where $a_t^x$ and $a_t^y$ are accelerations applied to the drone in the $x$- and $y$-directions, respectively. The actions are continuous and bounded as:

\[
a_t^x, a_t^y \in [-1, 1]
\]

\subsection{Dynamics}
The drone’s position and velocity evolve over discrete time steps according to the following equations:

\[
x_{t+1} = x_t + a_{x, t} \cdot \Delta t + w_x
\]

\[
y_{t+1} = y_t + a_{y, t} \cdot \Delta t + w_y
\]

where the time step is fixed at:

\[
\Delta t = 0.5
\]

\subsection{Sampling Turbulence (Environment Action)}
In this environment, turbulence is represented as an external force affecting the agent's velocity. The turbulence at each time step is defined as a vector \( \mathbf{w}_t = [w_x, w_y] \), where \( w_x \) and \( w_y \) represent the force along the \( x \)- and \( y \)-axes, respectively.

The turbulence is sampled at each time step using the following rule:

\[
w_x, w_y \sim \mathcal{U}(-\sigma, \sigma)
\]

where:
\begin{itemize}
    \item \( \mathcal{U}(-\sigma, \sigma) \) denotes a uniform distribution.
    \item \( \sigma \) is the maximum magnitude of turbulence along each axis (default \( \sigma = 0.6 \)).
\end{itemize}

\subsection{Reward Function}
The reward function encourages the drone to minimize its distance to the target while applying a small constant penalty to incentivize faster completion of the task. The reward at time $t$, denoted as $r_t$, is defined as:

\[
r_t = -(\alpha \cdot d_t) - \beta
\]

where $d_t$ is the Euclidean distance between the current position of the drone $(x_t, y_t)$ and the target position $(x_{\text{goal}}, y_{\text{goal}})$, calculated as:

\[
d_t = \sqrt{(x_t - x_{\text{goal}})^2 + (y_t - y_{\text{goal}})^2}
\]

finally, $\alpha$ is a scaling factor for the distance penalty (set to $\alpha = 0.0005$ by default) and $\beta$ is a small constant penalty to encourage the agent to reach the target quickly (set to $\beta = 0.0001$ by default). This reward function ensures that:
\begin{enumerate}
    \item The agent receives larger penalties for being far from the target ($d_t$).
    \item The small constant penalty ($\beta$) encourages the agent to minimize the number of time steps taken to reach the target.
\end{enumerate}

\subsection{Safety Constraint}
For the main set of experiments, the environment contains four predefined unsafe zones:
\begin{enumerate}
    \item A circle centered at $(5.5, 8.0)$ with a radius of $3.5$.
    \item A circle centered at $(5.5, 13.0)$ with a radius of $3.5$.
    \item A circle centered at $(15.5, 8.0)$ with a radius of $3.5$.
    \item A circle centered at $(15.5, 13.0)$ with a radius of $3.5$.
\end{enumerate}

The union of these regions forms the total unsafe zone $\mathcal{Z}$, defined as $\mathcal{Z} = \{(x, y) \, | \, \sqrt{(x - 5.5)^2 + (y - 8.0)^2} \leq 3.5 \wedge \text{...}\}$\footnote{ Repeated for each obstacle (omitted for calrity).}

Let $\mathbf{1}_{\mathcal{Z}}(x_t, y_t)$ be an indicator function that is $1$ if the agent is in the unsafe zone at time $t$, and $0$ otherwise:

\[
\mathbf{1}_{\mathcal{Z}}(x_t, y_t) =
\begin{cases}
1 & \text{if } (x_t, y_t) \in \mathcal{Z}, \\
0 & \text{otherwise.}
\end{cases}
\]

The agent must satisfy the following constraint:

\[
\sum_{t' = t}^{t+4} \mathbf{1}_{\mathcal{Z}}(x_{t'}, y_{t'}) \leq 1
\]

which ensures that the agent does not remain in an unsafe region for more than one consecutive time steps.

\section{Supplement for the Main Text}
\label{app:suppl}

\subsection{Figures of Automatas.}

We now show automatas for specifications of Ex.~\ref{ex:running}
and Ex.~\ref{ex:conditional}.

\begin{figure*}[t!]
\centering
  \includegraphics[width=1\linewidth]{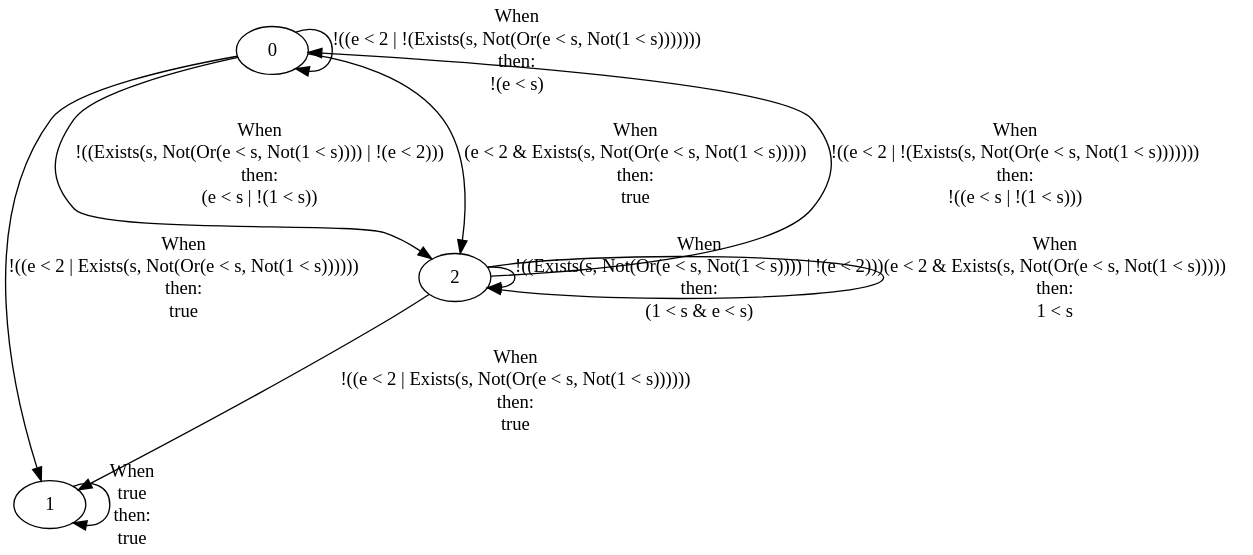}
  \caption{Synthetised strategy for Ex.~\ref{ex:intro}. Note that $x$ of Ex.~\ref{ex:intro} is here $e$ and $y$ is $s$.}
  \label{fig:simpleAutomata}
\end{figure*}

\begin{figure*}[t!]
\centering
  \includegraphics[width=1\linewidth]{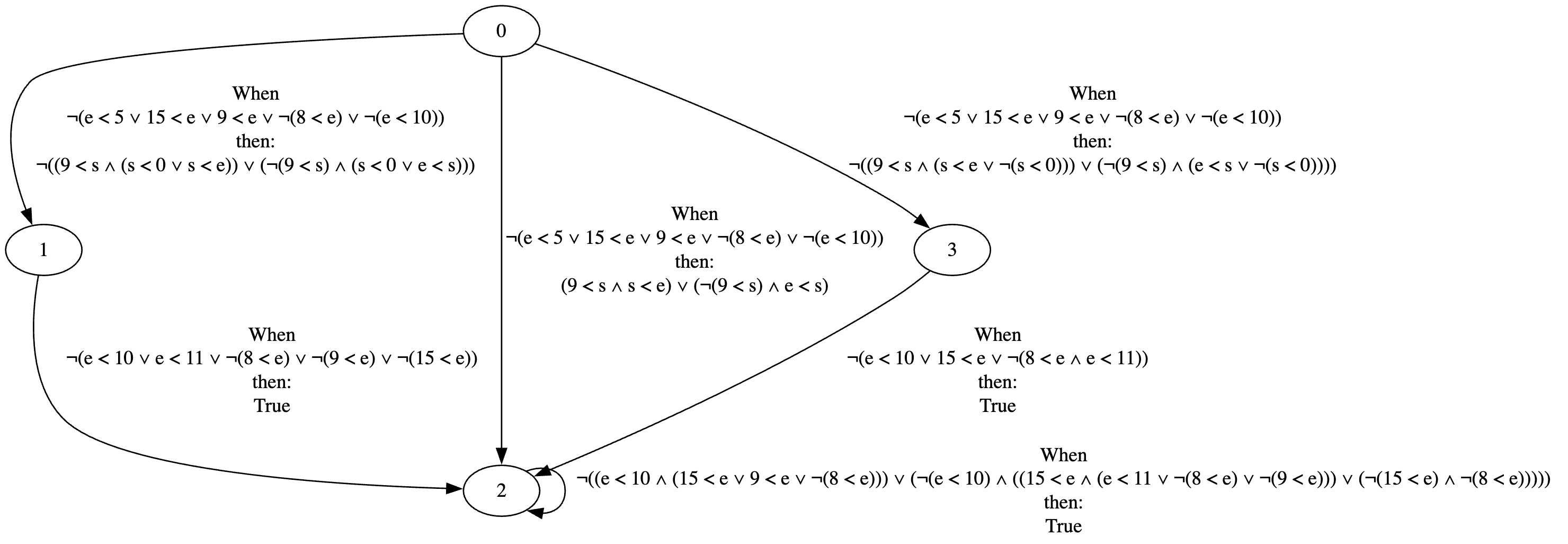}
  \caption{Synthetised strategy for Ex.~\ref{ex:conditional}}
  \label{fig:conditionalAutomata}
\end{figure*}

Moreover, the explainability problem from synthetised
automata quickly becomes worse as we increase the complexity 
of the specifications.
For instance, consider the following slightly more complex specification:

 \begin{align*}
   \varphi_{\mathcal{T}} : \square
   \begin{pmatrix}
     \begin{array}{rcl}
   \textcolor{green}{\Event} (x<10) & \Impl & \Next^{\textcolor{green}{2}}(y>9)
 \\  &\wedge& \\
  (x \geq 10)  &\Impl&  \textcolor{green}{\Event} (y \leq x),
        \\ 
     \end{array}
   \end{pmatrix}
 \end{align*}
 where the green colour denotes the changes with respect to the
 specification of Ex.~\ref{ex:running}.

The strategy for this specification is depicted in Fig.~\ref{fig:complexAutomata}, where
we can see that the difficulty of extracting explanations 
from Mealy machines gets impossible even with reasonable specifications.

\begin{figure*}[t!]
\centering
  \includegraphics[width=1\linewidth]{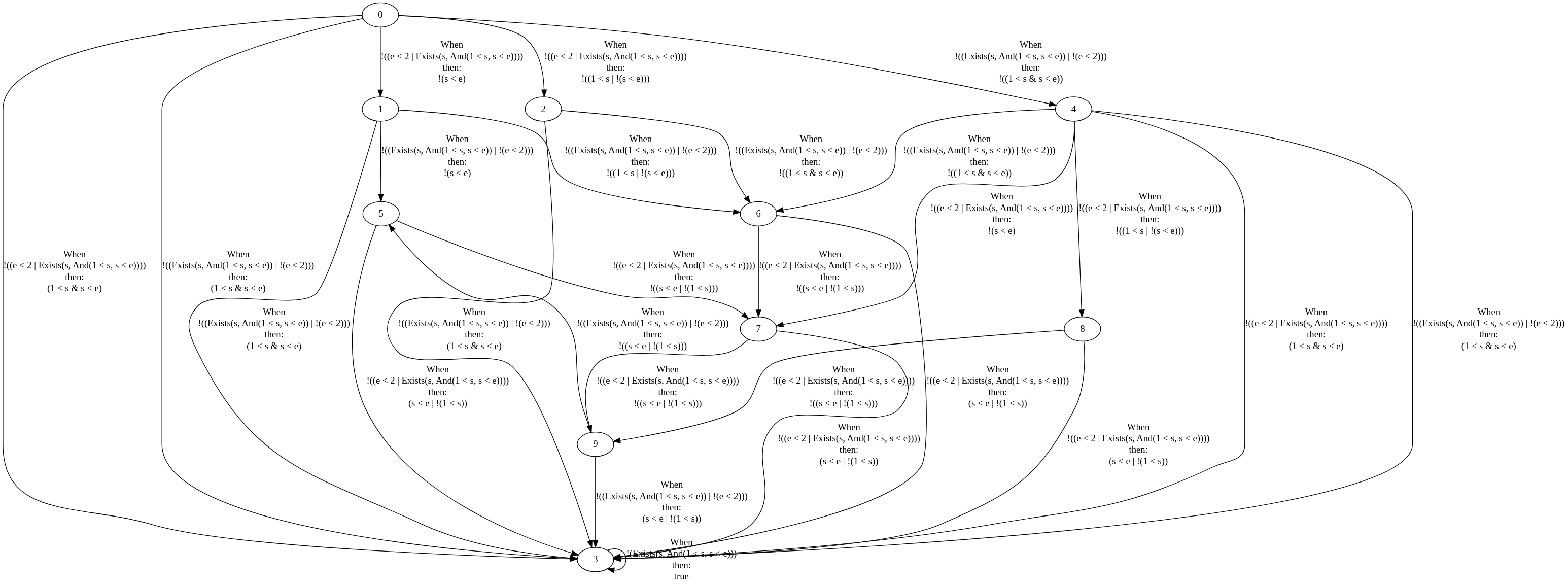}
  \caption{A large automata (note that it is supposed to be unintelligible).}
  \label{fig:complexAutomata}
\end{figure*}

\subsection{Missing Algorithms}

Find below Alg.~\ref{algoQBFPlus}.

\begin{algorithm}
\begin{algorithmic}[1]
  \REQUIRE $\phiB$, $\Max$, $E$, $S$ 
  \STATE $T \gets \emptyset$ 
  \FOR{$n=1 \text{ to } \Max$}
    \STATE $[s_0,\ldots,s_n] \gets \Copies(S,n)$ 
    \STATE $[e_0,\ldots,e_n] \gets \Copies(E,n)$ 
    \STATE $F \gets \Unroll(\phiB,n)$ 
    \STATE $G \gets \texttt{alternations}([s_0,\ldots,s_n],[e_0,\ldots,e_n]). F$  
    \STATE $\phiB^{\text{qbf}} \gets \QElim([s_0,\ldots,s_n],G)$\\
    \IF{$\neg \phiB^{\text{qbf}}$ is SAT}
      \STATE \textbf{return} $(\texttt{true}, \Witness(\phiB^{\text{qbf}}))$
    \ENDIF
  \ENDFOR
  \STATE \textbf{return} \texttt{unreal} 
\caption{Cond. bounded unrealiz. check for LTL.}
 \label{algoQBFPlus}
 \end{algorithmic}
\end{algorithm}


\end{document}